\documentclass[11pt]{article}

\usepackage[T1]{fontenc}
\usepackage[utf8]{inputenc}
\usepackage{lmodern}
\usepackage{microtype}
\usepackage[a4paper,margin=1in]{geometry}

\usepackage{amsmath,amssymb,amsthm,mathtools}
\usepackage{mathrsfs}
\usepackage{enumitem}
\usepackage{cite}
\usepackage{hyperref}

\hypersetup{
  colorlinks=true,
  linkcolor=blue,
  citecolor=blue,
  urlcolor=blue
}

\allowdisplaybreaks

\makeatletter
\renewcommand\paragraph{\@startsection{paragraph}{4}{\z@}%
  {1.25ex \@plus .5ex \@minus .2ex}%
  {.6ex \@plus .1ex}%
  {\normalfont\normalsize\bfseries}}
\makeatother

\newtheorem{theorem}{Theorem}[section]
\newtheorem{lemma}[theorem]{Lemma}
\newtheorem{proposition}[theorem]{Proposition}
\newtheorem{corollary}[theorem]{Corollary}
\newtheorem{fact}[theorem]{Fact}
\newtheorem{definition}[theorem]{Definition}
\newtheorem{remark}[theorem]{Remark}

\newtheorem{problem}[theorem]{Problem}

\newcommand{\FO}{\mathrm{FO}}

\newcommand{\qr}{\operatorname{qr}}
\newcommand{\Lang}{\mathcal{L}}
\newcommand{\N}{\mathbb{N}}
\newcommand{\eps}{\varepsilon}
\newcommand{\rankprof}{\rho}
\newcommand{\sepprof}{\sigma}
\newcommand{\qdist}{\delta_{\FO}}
\newcommand{\Syn}{\operatorname{Syn}}
\newcommand{\Dist}{\operatorname{Dist}}
\newcommand{\Mod}{\operatorname{Mod}}
\newcommand{\first}{\operatorname{first}}
\newcommand{\last}{\operatorname{last}}
\newcommand{\succr}{\operatorname{succ}}
\newcommand{\defect}{\operatorname{Def}}
\newcommand{\Ball}{\mathsf B}
\newcommand{\Ker}{\operatorname{Ker}}
\newcommand{\Min}{\operatorname{Min}}
\newcommand{\calP}{\mathcal P}
\newcommand{\calF}{\mathcal F}
\newcommand{\powind}{\operatorname{ind}}
\newcommand{\powper}{\operatorname{per}}

\title{Finite-Horizon First-Order Rank Profiles of Regular Languages}

\author{
Madina Bazarova\\
\small \href{mailto:madina.bazarova@bahcesehir.edu.tr}{madina.bazarova@bahcesehir.edu.tr}\\
\small Department of Computer Engineering\\
\small Bahcesehir University
\and
Faruk Alpay\\
\small \href{mailto:faruk.alpay@bahcesehir.edu.tr}{faruk.alpay@bahcesehir.edu.tr}\\
\small Department of Computer Engineering\\
\small Bahcesehir University
}

\date{April 29, 2026}

\begin{document}
\maketitle

\begin{abstract}
We introduce the finite-horizon first-order rank profile of a language
$L\subseteq\Sigma^\ast$: the least quantifier rank needed by an
$\FO[<]$ sentence to classify membership in $L$ correctly on all words of
length at most $n$. The invariant measures quantifier depth only; formula
size is deliberately not bounded. First, we prove a rank calculus that is
independent of regularity. Every language satisfies
$\rho_L(n)\le \lceil\log_2 n\rceil+4$, via balanced first-order distance
formulas and exact-word definitions. Moreover,
$\sup_n\rho_L(n)<\infty$ holds exactly when $L$ is globally
$\FO[<]$-definable, and the supremum equals the minimum quantifier rank of
such a definition. Second, for regular languages we prove a sharp
aperiodicity gap: if the syntactic monoid of $L$ is aperiodic, then
$\rho_L(n)=O(1)$; otherwise $\rho_L(n)=\log_2 n+O_L(1)$. The lower bound
extracts a nontrivial cyclic component from the syntactic monoid and
combines it with an Ehrenfeucht--Fra\"iss\'e power lemma for long
repetitions of a fixed word. Thus, for full $\FO[<]$ quantifier rank,
regular languages admit no intermediate finite-horizon growth between
bounded and logarithmic rank.
\end{abstract}

\section{Introduction}

A word
\[
w=a_1a_2\cdots a_m\in\Sigma^\ast
\]
is viewed as the finite ordered structure
\[
\mathbf w=(\{1,\dots,m\},<,(P_a)_{a\in\Sigma}),
\]
where
\[
P_a(i)
\quad\Longleftrightarrow\quad
a_i=a.
\]
Over this signature, monadic second-order logic defines exactly the regular
languages, while first-order logic with order, $\FO[<]$, defines exactly the
star-free regular languages. Algebraically, a regular language is first-order
definable if and only if its syntactic monoid is aperiodic
\cite{buchi1960,schutzenberger1965,mcnaughtonpapert1971,eilenberg1976,pin1986,straubing1994,thomas1997}.

This paper studies a finite-horizon quantitative version of that classical
picture. Given a language $L\subseteq\Sigma^\ast$ and a length bound $n$, we
ignore all words longer than $n$ and ask for the least first-order quantifier
rank needed to classify membership in $L$ on the finite ball
\[
\Sigma^{\le n}:=\{w\in\Sigma^\ast: |w|\le n\}.
\]
This least rank is denoted by $\rankprof_L(n)$.

The invariant deliberately records only quantifier rank. It does not impose
a bound on formula size. This distinction is essential. Every finite horizon
can be classified at logarithmic quantifier rank, even for arbitrary
nonregular languages, by writing exact-word formulas and disjoining those
corresponding to the accepted words of length at most $n$. The resulting
formula may be exponentially large. The question is therefore not succinctness
but the minimum nesting depth of first-order quantification required at a
given horizon.

\paragraph{Positioning of the contribution.}

The classical Sch\"utzenberger--McNaughton--Papert theorem characterizes
the regular languages definable in $\FO[<]$ as the star-free languages, or
equivalently as those with aperiodic syntactic monoid. This paper does not
claim a new global characterization of first-order definability. Its
contribution is quantitative and finite-horizon: it studies the least
first-order quantifier rank required to classify a language on the ball
$\Sigma^{\le n}$ and determines the possible asymptotic behaviour of this
rank for regular languages.

\paragraph{Scope of the main theorem.}

The main dichotomy is a statement about full $\FO[<]$ and quantifier rank
alone. It is not a succinctness theorem, since the formulas used for the
universal upper bound may have size exponential in the horizon. It also
does not automatically transfer to smaller fragments such as $\FO^2[<]$,
alternation levels, or positive first-order logic. These refinements are
therefore treated only as separate directions after the main theorem.

\paragraph{Main result.}

The profile has two general properties. First, for arbitrary languages,
boundedness is exactly global first-order definability:
\[
\sup_n\rankprof_L(n)<\infty
\quad\Longleftrightarrow\quad
L\in\FO[<].
\]
Second, for regular languages, the unbounded side is rigid:
\[
\Syn(L)\text{ nonaperiodic}
\quad\Longrightarrow\quad
\rankprof_L(n)=\log_2 n+O_L(1).
\]
Together with the Sch\"utzenberger--McNaughton--Papert theorem, this yields
the dichotomy
\[
\rankprof_L(n)=O(1)
\quad\text{or}\quad
\rankprof_L(n)=\log_2 n+O_L(1),
\]
with the bounded case occurring precisely for star-free languages.

\paragraph{Proof architecture.}

The proof separates the model-theoretic and algebraic components.

First, finite-horizon classification is expressed as saturation under the
rank-$q$ equivalence relation $\equiv_q$. This gives a quotient calculus valid
for all languages. Second, balanced first-order distance formulas yield
exact-word definitions of logarithmic quantifier rank, giving
\[
\rankprof_L(n)\le \lceil\log_2 n\rceil+4
\]
for every language $L$. Third, boundedness of $\rankprof_L$ is shown to be
equivalent to global saturation under some $\equiv_q$, hence to global
$\FO[<]$ definability. Finally, for regular non-star-free languages,
nonaperiodicity of the syntactic monoid yields contexted periodic powers
\[
r x^k s
\]
whose membership is eventually nonconstant modulo a period. An
Ehrenfeucht--Fra\"iss\'e power lemma shows that
\[
m,m'\ge 2^q
\quad\Longrightarrow\quad
x^m\equiv_q x^{m'}.
\]
Thus, at horizon $n$, one can choose two words of length at most $n$ with
different membership but identical rank-$q$ type for
\[
q\le \log_2 n-O_L(1).
\]
This gives the lower bound matching the universal upper bound up to an
additive constant.

\section{Preliminaries}

Fix a finite alphabet $\Sigma$. A word
\[
w=a_1a_2\cdots a_m\in\Sigma^\ast
\]
is represented as the finite structure
\[
\mathbf w=(\{1,\dots,m\},<,(P_a)_{a\in\Sigma}),
\]
where $<$ is the usual strict order and
\[
P_a(i)\quad\Longleftrightarrow\quad a_i=a.
\]
The empty word is represented by the empty structure.

We work with first-order logic over this signature and write $\FO[<]$ for it.
The quantifier rank of a formula $\varphi$ is denoted $\qr(\varphi)$.
For words $u,v$ and $q\in\N$, write
\[
u\equiv_q v
\]
if $u$ and $v$ satisfy the same $\FO[<]$ sentences of quantifier rank at most
$q$. Equivalently, Duplicator wins the $q$-round Ehrenfeucht--Fra\"iss\'e game
on $u$ and $v$ \cite{libkin2004}.

\subsection{Rank types}

\begin{fact}[Rank-$q$ characteristic formulas]\label{fact:types}
For every finite alphabet $\Sigma$, every $q\in\N$, and every word
$u\in\Sigma^\ast$, there exists an $\FO[<]$ sentence $\tau_{u,q}$ of
quantifier rank at most $q$ such that, for every word $v$,
\[
v\models\tau_{u,q}
\quad\Longleftrightarrow\quad
v\equiv_q u.
\]
Consequently, $\equiv_q$ has finite index on $\Sigma^\ast$.
\end{fact}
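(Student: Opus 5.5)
The plan is the standard Hintikka-formula construction, carried out for formulas with free variables and by induction on $q$. For $k\ge 0$, a word $u$ and a tuple $\bar a=(a_1,\dots,a_k)$ of positions of $u$, I will define a formula $\tau^{k}_{u,\bar a,q}(x_1,\dots,x_k)$ of quantifier rank at most $q$. The goal is: for every word $v$ and tuple $\bar b$ of positions of $v$, we have $v\models\tau^k_{u,\bar a,q}(\bar b)$ if and only if Duplicator wins the $q$-round Ehrenfeucht--Fra\"iss\'e game on $(u,\bar a)$ and $(v,\bar b)$. Taking $k=0$ then gives $\tau_{u,q}$, using the stated equivalence between $\equiv_q$ and the game.

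\textbf{Base case $q=0$.} Let $\tau^k_{u,\bar a,0}$ be the conjunction of all literals over the variables $x_1,\dots,x_k$ that hold of $\bar a$ in $u$. These literals are atoms or negated atoms of the forms $x_i<x_j$, $x_i=x_j$, $P_c(x_i)$ with $c\in\Sigma$. Because $\Sigma$ and $k$ are finite, this is a finite conjunction. It states exactly that $\bar a\mapsto\bar b$ is a partial isomorphism, which is the winning condition of the $0$-round game.

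\textbf{Inductive step.} Define
\[
\tau^k_{u,\bar a,q+1}:=\bigwedge_{a\in\mathrm{dom}(u)}\exists x_{k+1}\,\tau^{k+1}_{u,\bar a a,q}\;\wedge\;\forall x_{k+1}\bigvee_{a\in\mathrm{dom}(u)}\tau^{k+1}_{u,\bar a a,q}.
\]
The first conjunct handles Spoiler playing in $u$: every such move has an answer in $v$. The second conjunct handles Spoiler playing in $v$: every such move has an answer in $u$. Both indexing sets are finite because $u$ is finite, so the formula is well defined, and its rank is $q+1$. Correctness of the biconditional follows from the inductive hypothesis by unfolding one round of the game.

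\textbf{Main obstacle: finite index.} The formula above depends on $u$, and since $u$ ranges over infinitely many words, it is not yet clear that only finitely many distinct formulas arise. This is needed for the "consequently" clause. I will prove simultaneously, by induction on $q$, that for fixed $k$ the set $T_{k,q}$ of formulas $\tau^k_{u,\bar a,q}$ (ranging over all $u$ and $\bar a$) is finite up to syntactic identity after removing duplicate conjuncts and disjuncts.
\begin{itemize}[nosep]
\item For $q=0$, $T_{k,0}$ consists of subsets of a fixed finite set of literals, so it is finite.
\item For the step, $\tau^k_{u,\bar a,q+1}$ is determined by the set $\{\tau^{k+1}_{u,\bar a a,q}: a\in\mathrm{dom}(u)\}$, which is a subset of the finite set $T_{k+1,q}$. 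Treating conjunctions and disjunctions as sets, there are at most $2^{|T_{k+1,q}|}$ such formulas.
\end{itemize}

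With this normalization, the defined formulas are genuine finite $\FO[<]$ sentences. Moreover, $\equiv_q$ has at most $|T_{0,q}|$ classes, because $u\equiv_q u'$ holds whenever $\tau_{u,q}=\tau_{u',q}$, and each word satisfies its own formula.
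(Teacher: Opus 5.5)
Your proposal is correct and follows essentially the same route as the paper. It is the standard Hintikka-formula construction the paper cites: back-and-forth formulas with free variables built by induction on $q$, plus an induction on $q$ bounding the number of distinct normalized formulas and hence the index of $\equiv_q$. The one point to state more carefully is that this finiteness induction must be proved for all $k$ simultaneously at each level $q$, since $T_{k,q+1}$ is built from $T_{k+1,q}$.
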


\begin{proof}
This is the standard Hintikka formula construction for finite relational
vocabularies. The finitely many rank-$q$ sentence types are obtained by
induction on $q$, using the finite vocabulary and Boolean closure. See
\cite[Chapter~3]{libkin2004}.
\end{proof}

\begin{lemma}[Composition]\label{lem:composition}
For every $q$, the relation $\equiv_q$ is a congruence on $\Sigma^\ast$.
Hence the quotient
\[
T_q(\Sigma):=\Sigma^\ast/{\equiv_q}
\]
is a finite monoid.
\end{lemma}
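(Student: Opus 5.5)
The plan is to argue with Ehrenfeucht--Fra\"iss\'e games. Since $\equiv_q$ is already an equivalence relation, being a congruence means that $u\equiv_q u'$ and $v\equiv_q v'$ imply $uv\equiv_q u'v'$. By transitivity it suffices to show
\[
uv\equiv_q u'v'
\]
directly from Duplicator strategies on the two factors. Concretely, fix winning $q$-round strategies $S_1$ on $(u,u')$ and $S_2$ on $(v,v')$. View $\mathbf{uv}$ as the ordered sum of $\mathbf u$ followed by $\mathbf v$, with positions $1,\dots,|u|$ forming the left block and the rest the right block, and likewise for $\mathbf{u'v'}$.

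Duplicator's composite strategy works as follows. When Spoiler picks a position in the left block of either structure, Duplicator answers in the left block of the other structure according to $S_1$, treating it as the next move of the game on $(u,u')$. A pick in a right block is handled by $S_2$ in the same way. The two subgames each last at most $q$ rounds, because each round of the big game advances exactly one of them. Hence both $S_1$ and $S_2$ stay winning throughout.

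At the end we must verify that the map between the chosen elements is a partial isomorphism. Letter predicates are preserved because each pebble pair lies in corresponding blocks and the subgame positions are partial isomorphisms. Equality is preserved as well, since two pebbles in different blocks are never equal on either side. Order is the one point needing care:
\begin{itemize}[label={}]
\item if both pebbles lie in the same block, the relevant subgame's partial isomorphism preserves $<$;
\item if they lie in different blocks, the left-block element is smaller on both sides.
\end{itemize}
Empty factors cause no trouble, since Spoiler simply cannot move into an empty block. If one factor is empty and its partner is not, then $q\ge1$ already separates them via $\exists x\,(x=x)$. For $q=0$ the claim is trivial, because $\equiv_0$ is the total relation on sentences without quantifiers, that is, on truth constants.

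Finiteness of $T_q(\Sigma)$ then follows from Fact~\ref{fact:types}, which gives finite index. The monoid structure is inherited from concatenation: the operation $[u][v]:=[uv]$ is well defined by the congruence property, associativity comes from $\Sigma^\ast$, and $[\varepsilon]$ is the identity. I expect the main obstacle to be only bookkeeping in the order check across blocks, together with the observation that the round budget is shared between the two subgames and never exceeds $q$ in either.
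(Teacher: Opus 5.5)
Your proof is correct and follows the paper's argument: Duplicator plays the winning strategies on $(u,u')$ and $(v,v')$ separately in the corresponding blocks. Within a block, order, equality and letters are preserved by that block's strategy, and across blocks every left-block position precedes every right-block position. Your remarks on empty factors and on $q=0$, and your derivation of the finite monoid structure from Fact~\ref{fact:types}, spell out points the paper leaves implicit.
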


\begin{proof}
Suppose
\[
u\equiv_q u',
\qquad
v\equiv_q v'.
\]
We show that
\[
uv\equiv_q u'v'.
\]
In the $q$-round Ehrenfeucht--Fra\"iss\'e game on $uv$ and $u'v'$, Duplicator
combines the winning strategies on $(u,u')$ and on $(v,v')$. If Spoiler plays
in the left component of one concatenation, Duplicator answers in the
corresponding left component using the strategy for $(u,u')$. If Spoiler plays
in the right component, Duplicator uses the strategy for $(v,v')$.

After each round, the pebbled positions in the left components form a partial
isomorphism, and the pebbled positions in the right components form a partial
isomorphism. Every left-component position is smaller than every
right-component position in both concatenations. Therefore equality, order,
and letter predicates are preserved. This gives a winning strategy for
Duplicator.
\end{proof}

\begin{fact}[Long finite orders]\label{fact:linearorders}
Let $I_m$ denote the pure finite linear order of size $m$. If
\[
m,m'\ge2^q,
\]
then
\[
I_m\equiv_q I_{m'}.
\]
The same statement holds for unary words $a^m,a^{m'}$.
\end{fact}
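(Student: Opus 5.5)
The plan is to prove, by induction on $q$, a stronger statement with a parameter that lets the induction go through. For $m,m'\ge 2^q-1$, Duplicator wins the $q$-round Ehrenfeucht--Fra\"iss\'e game on $I_m$ and $I_{m'}$, which is slightly stronger than Fact~\ref{fact:linearorders}. In fact we prove the statement with endpoints. Adjoin two virtual boundary elements $0$ and $m+1$, respectively $0$ and $m'+1$, which are regarded as already pebbled. The claim is then: whenever the two orders have equal size, or both sizes are at least $2^q-1$, Duplicator has a $q$-round winning strategy that respects the boundaries. For $q=0$ there is nothing to check. The empty-versus-nonempty issue is excluded, since $2^0-1=0$ still allows comparison of empty structures only with the boundary elements, and no pebbles are placed.

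For the inductive step, suppose $m,m'\ge 2^q-1$ with $m\neq m'$, and let Spoiler play position $i$ in $I_m$. This splits $I_m$ into a left interval of size $i-1$ and a right interval of size $m-i$. Duplicator chooses $i'$ in $I_{m'}$ as follows.
\begin{itemize}
\item If $i-1<2^{q-1}-1$, take $i'=i$, so the left parts are equal. The right part of $I_m$ then has size $m-i\ge 2^q-1-(2^{q-1}-1)=2^{q-1}$, and likewise $m'-i'\ge 2^{q-1}$.
\item Symmetrically, if $m-i<2^{q-1}-1$, match distance from the right end.
\item Otherwise, both sides of $i$ have size at least $2^{q-1}-1$. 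Since $m'\ge 2^q-1$, Duplicator can choose $i'$ with both sides of size at least $2^{q-1}-1$, for instance $i'=2^{q-1}$.
\end{itemize}
In every case each pair of corresponding subintervals is either equal in size or has both sizes at least $2^{q-1}-1$. By induction Duplicator wins $q-1$ rounds on each pair. She combines these strategies exactly as in the proof of Lemma~\ref{lem:composition}: every move lands in a left or right component, or on the pebbled point, and the order between components is fixed. This yields a winning strategy for the remaining $q-1$ rounds.

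The main point to be careful about is the arithmetic threshold. One must check that $2^q-1$ is closed under the split: $2^q-1=(2^{q-1}-1)+1+(2^{q-1}-1)$. This identity is exactly what guarantees that in the third case a suitable $i'$ exists, and that in the first two cases the large side stays above $2^{q-1}-1$. The statement with bound $2^q$ then follows a fortiori.

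For the unary words $a^m$ and $a^{m'}$, the structure is the linear order together with the predicate $P_a$, which holds everywhere. Any partial isomorphism of orders is therefore automatically a partial isomorphism of the unary word structures. The same strategy works verbatim, giving $a^m\equiv_q a^{m'}$.
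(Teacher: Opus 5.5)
Your proof is correct and is essentially the paper's interval-splitting argument: copy offsets when one side of Spoiler's move is short, otherwise keep both sides above the halved threshold. You organize it as an induction on $q$ with composition of the left and right subgames, rather than as an invariant maintained throughout the play.

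Your sharp threshold $2^q-1$, which splits exactly as $2^q-1=(2^{q-1}-1)+1+(2^{q-1}-1)$, is what makes the balanced case work. The paper's invariant $\min\{|J|,2^r\}=\min\{|J'|,2^r\}$, with $|J|$ counting points, cannot be maintained when, e.g., $|J|=2^r+1$, $|J'|=2^r$ and Spoiler plays the middle point of $J$, since no point of $J'$ has $2^{r-1}$ points on each side.
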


\begin{proof}
Duplicator maintains the standard interval invariant. After a partial play
with $r$ rounds left, the selected points cut the two orders into corresponding
intervals $J,J'$ satisfying
\[
\min\{|J|,2^r\}=\min\{|J'|,2^r\}
\]
for every corresponding pair. Initially this holds because both orders have
size at least $2^q$. A move inside an interval of clipped size less than
$2^r$ is copied at the same offset. A move inside an interval of clipped size
$2^r$ is answered so that the two left subintervals and the two right
subintervals have equal clipped sizes at threshold $2^{r-1}$. Induction on
$r$ proves the claim.
\end{proof}

\subsection{Regular languages and syntactic monoids}

For a regular language $L\subseteq\Sigma^\ast$, its syntactic congruence is
\[
u\equiv_L v
\quad\Longleftrightarrow\quad
\forall r,s\in\Sigma^\ast,\
r u s\in L\Longleftrightarrow r v s\in L.
\]
The quotient monoid is denoted $\Syn(L)$, and the quotient morphism is
\[
\alpha_L:\Sigma^\ast\to\Syn(L).
\]
There is an accepting set $F_L\subseteq\Syn(L)$ such that
\[
L=\alpha_L^{-1}(F_L).
\]

A finite monoid $M$ is aperiodic if there exists $\omega\ge1$ such that
\[
m^\omega=m^{\omega+1}
\qquad(m\in M).
\]
Equivalently, every subgroup of $M$ is trivial. The
Sch\"utzenberger--McNaughton--Papert theorem states that, for regular $L$,
\[
L\in\FO[<]
\quad\Longleftrightarrow\quad
L\text{ is star-free}
\quad\Longleftrightarrow\quad
\Syn(L)\text{ is aperiodic}.
\]

\section{Finite-horizon rank profiles}

\begin{definition}[Rank profile]\label{def:rank-profile}
For $L\subseteq\Sigma^\ast$ and $n\in\N$, define
\[
\rankprof_L(n)
=
\min\Bigl\{
q\in\N:
\forall u,v\in\Sigma^{\le n},\
(u\in L,\ v\notin L)
\Rightarrow
u\not\equiv_q v
\Bigr\}.
\]
If $L\cap\Sigma^{\le n}$ is empty or all of $\Sigma^{\le n}$, then
$\rankprof_L(n)=0$.
\end{definition}

For distinct words $u,v$, define their first-order rank distance by
\[
\qdist(u,v)
=
\min\{q\in\N:u\not\equiv_q v\}.
\]

\begin{proposition}[Maximal obstruction]\label{prop:max-obstruction}
For every $L\subseteq\Sigma^\ast$ and every $n\in\N$,
\[
\rankprof_L(n)
=
\max
\Bigl\{
\qdist(u,v):
u,v\in\Sigma^{\le n},\
u\in L,\ v\notin L
\Bigr\},
\]
where the maximum of the empty set is $0$.
\end{proposition}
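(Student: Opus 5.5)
The plan is to prove the equality by two inequalities, using only the monotonicity of the equivalences $\equiv_q$ in $q$. Write $D$ for the set of pairs $(u,v)$ with $u,v\in\Sigma^{\le n}$, $u\in L$, $v\notin L$, and $M$ for the maximum on the right-hand side. First I will check that $M$ is well defined. $D$ is finite because $\Sigma^{\le n}$ is finite. For $(u,v)\in D$ we have $u\neq v$, since one word lies in $L$ and the other does not. Hence $\qdist(u,v)$ is defined, and it is finite because distinct words are separated by some sentence (for instance the characteristic sentence $\tau_{u,q}$ of Fact~\ref{fact:types} for $q$ large enough, or a sentence fixing the exact length and letters). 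I will also record that $\equiv_{q+1}$ refines $\equiv_q$, because every sentence of rank at most $q$ also has rank at most $q+1$. Consequently $u\not\equiv_q v$ implies $u\not\equiv_{q'} v$ for all $q'\ge q$, and the set of $q$ with $u\not\equiv_q v$ is exactly $\{q: q\ge\qdist(u,v)\}$.

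\emph{Upper bound.} Take $q=M$. For every $(u,v)\in D$ we have $\qdist(u,v)\le M$, so monotonicity gives $u\not\equiv_M v$. Thus $M$ belongs to the set whose minimum defines $\rankprof_L(n)$, and so $\rankprof_L(n)\le M$. If $D=\emptyset$, the defining condition is vacuous for every $q$, so $\rankprof_L(n)=0=M$. This matches the convention in Definition~\ref{def:rank-profile}.

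\emph{Lower bound.} Suppose $q$ satisfies the defining condition, and pick $(u,v)\in D$ attaining $M$. Then $u\not\equiv_q v$, so $q\ge\qdist(u,v)=M$ by the definition of $\qdist$ as a minimum. Hence $\rankprof_L(n)\ge M$.

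I do not expect a real obstacle here. The only points needing care are the monotonicity observation, the finiteness of $\qdist$ on distinct words, and the empty-set convention.
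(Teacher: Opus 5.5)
Your proof is correct and follows the paper's argument, which simply observes that a rank $q$ works exactly when $q\ge\qdist(u,v)$ for every accepted/rejected pair. You spell out the points the paper leaves implicit: the monotonicity of $\equiv_q$ in $q$, the finiteness of $\qdist$ on distinct words, and the empty-set case.

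One small remark on finiteness of $\qdist(u,v)$: use the exact-word sentence $\chi_u$ of Lemma~\ref{lem:exact-word} as the witness. Invoking $\tau_{u,q}$ alone is circular, because $\tau_{u,q}$ separates $u$ from $v$ only if you already know $u\not\equiv_q v$ for some $q$.
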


\begin{proof}
A rank $q$ separates all accepted/rejected pairs inside $\Sigma^{\le n}$
exactly when
\[
q\ge \qdist(u,v)
\]
for every such pair $(u,v)$. The least such $q$ is the displayed maximum.
\end{proof}

\begin{proposition}[Finite-horizon saturation]\label{prop:saturation}
For $L\subseteq\Sigma^\ast$, $n\in\N$, and $q\in\N$, the following are
equivalent:
\begin{enumerate}[label=\textup{(\roman*)},leftmargin=2.4em]
\item $\rankprof_L(n)\le q$;
\item $L\cap\Sigma^{\le n}$ is saturated under $\equiv_q$ inside
$\Sigma^{\le n}$:
\[
u\equiv_q v,\quad |u|,|v|\le n
\quad\Longrightarrow\quad
(u\in L\Longleftrightarrow v\in L);
\]
\item there exists an $\FO[<]$ sentence $\varphi$ with $\qr(\varphi)\le q$
such that
\[
\forall w\in\Sigma^{\le n},
\qquad
w\models\varphi
\Longleftrightarrow
w\in L.
\]
\end{enumerate}
\end{proposition}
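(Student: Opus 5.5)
I will prove the cycle (i)$\Rightarrow$(ii)$\Rightarrow$(iii)$\Rightarrow$(i), using only Definition~\ref{def:rank-profile}, monotonicity of $\equiv_q$ in $q$, and Fact~\ref{fact:types}.

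For (i)$\Rightarrow$(ii), suppose $\rankprof_L(n)\le q$ and let $p=\rankprof_L(n)$. By definition, every accepted/rejected pair $u\in L$, $v\notin L$ in $\Sigma^{\le n}$ satisfies $u\not\equiv_p v$. Since $\equiv_q$ refines $\equiv_p$ for $q\ge p$ (a rank-$p$ sentence is also a rank-$q$ sentence), we also get $u\not\equiv_q v$. Now take $u\equiv_q v$ with $|u|,|v|\le n$. If exactly one of them were in $L$, this pair would violate the separation just established. So $L\cap\Sigma^{\le n}$ is saturated. The converse direction (ii)$\Rightarrow$(i) is essentially the same statement read backwards, but I will not need it because the cycle closes through (iii).

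For (ii)$\Rightarrow$(iii), which is the constructive step, let $A=L\cap\Sigma^{\le n}$. This set is finite. I take
\[
\varphi=\bigvee_{u\in A}\tau_{u,q},
\]
using the Hintikka sentences of Fact~\ref{fact:types}. A disjunction does not increase quantifier rank, so $\qr(\varphi)\le q$. If $A$ is empty, I take $\varphi=\bot$, written for instance as $\exists x\,(x<x)$; if $q=0$ is required, I instead use a quantifier-free false sentence such as an empty disjunction.

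To check that $\varphi$ is correct, let $w\in\Sigma^{\le n}$.
\begin{itemize}
\item If $w\in L$, then $w\in A$ and $w\models\tau_{w,q}$, so $w\models\varphi$.
\item If $w\models\varphi$, then $w\equiv_q u$ for some $u\in A$. Saturation (ii) then gives $w\in L$.
\end{itemize}
Finiteness of $A$ could even be avoided, since $\equiv_q$ has finite index and only finitely many distinct $\tau_{u,q}$ occur.

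For (iii)$\Rightarrow$(i), let $u\in L$ and $v\notin L$ both lie in $\Sigma^{\le n}$. Then $u\models\varphi$ and $v\not\models\varphi$, and $\qr(\varphi)\le q$, so $u\not\equiv_q v$. Hence $q$ belongs to the set whose minimum defines $\rankprof_L(n)$, and therefore $\rankprof_L(n)\le q$.

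I see no real obstacle here. The only points needing care are the monotonicity of $\equiv_q$ in $q$ used in the first step, and the degenerate case $A=\emptyset$ in the second.
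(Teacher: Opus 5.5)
Your proof is correct and follows essentially the same route as the paper: saturation is read off the definition (you make explicit the monotonicity of $\equiv_q$ in $q$, which the paper leaves implicit), and definability comes from a finite disjunction of the Hintikka sentences $\tau_{u,q}$ of Fact~\ref{fact:types}. The differences are cosmetic: you index the disjunction by the words of $L\cap\Sigma^{\le n}$ rather than by $\equiv_q$-classes, and you close the cycle with (iii)$\Rightarrow$(i) instead of (iii)$\Rightarrow$(ii).
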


\begin{proof}
The equivalence of (i) and (ii) is immediate from
Definition~\ref{def:rank-profile}.

Assume (ii). If $L\cap\Sigma^{\le n}$ is empty, a contradiction of rank $0$
may be used; if $L\cap\Sigma^{\le n}=\Sigma^{\le n}$, a tautology of rank $0$
may be used. Otherwise, let $\mathcal C_{q,n}(L)$ be the set of $\equiv_q$
classes $C$ satisfying
\[
C\cap\Sigma^{\le n}\neq\varnothing
\quad\text{and}\quad
C\cap\Sigma^{\le n}\subseteq L.
\]
For each $C\in\mathcal C_{q,n}(L)$ choose a representative $u_C$. By
Fact~\ref{fact:types},
\[
\varphi_{q,n,L}
=
\bigvee_{C\in\mathcal C_{q,n}(L)}
\tau_{u_C,q}
\]
has quantifier rank at most $q$. For every $w\in\Sigma^{\le n}$,
\[
w\models\varphi_{q,n,L}
\quad\Longleftrightarrow\quad
[w]_{\equiv_q}\in\mathcal C_{q,n}(L)
\quad\Longleftrightarrow\quad
w\in L.
\]
Thus (iii) holds.

Conversely, if $\varphi$ has rank at most $q$ and classifies $L$ on
$\Sigma^{\le n}$, then $\equiv_q$-equivalent words in $\Sigma^{\le n}$ agree
on $\varphi$, hence have the same membership in $L$.
\end{proof}

The profile is monotone:
\[
\rankprof_L(n)\le\rankprof_L(n+1),
\]
and complement invariant:
\[
\rankprof_L(n)=\rankprof_{\Sigma^\ast\setminus L}(n).
\]

\subsection{Automaton defect form}

Let $L$ be regular, recognized by
\[
\alpha:\Sigma^\ast\to M
\]
with accepting set $F\subseteq M$. For $q,n\in\N$, define the finite-horizon
defect
\[
\defect_{q,n}^{\alpha,F}
=
\left\{
(m_0,m_1)\in F\times(M\setminus F):
\begin{array}{l}
\exists u,v\in\Sigma^{\le n}\text{ such that}\\
\alpha(u)=m_0,\ \alpha(v)=m_1,\ u\equiv_q v
\end{array}
\right\}.
\]
Then
\[
\rankprof_L(n)
=
\min\{q:\defect_{q,n}^{\alpha,F}=\varnothing\}.
\]
Equivalently,
\[
\rankprof_L(n)\le q
\quad\Longleftrightarrow\quad
\Ker(\pi_q)\cap(\Sigma^{\le n}\times\Sigma^{\le n})
\subseteq
\Ker(\mathbf 1_F\circ\alpha),
\]
where
\[
\mathbf 1_F(m)=
\begin{cases}
1,&m\in F,\\
0,&m\notin F.
\end{cases}
\]
Thus $\rankprof_L(n)$ is the first level $q$ at which rank-$q$ first-order
types separate the accepting and rejecting fibers of the recognizer on the
finite ball
\[
\Ball_n(\Sigma):=\Sigma^{\le n}.
\]

\section{Exact words at logarithmic rank}

This section proves the universal upper bound. Define
\[
\first(x):=\neg\exists y\,(y<x),
\qquad
\last(x):=\neg\exists y\,(x<y),
\]
and
\[
\succr(x,y)
:=
x<y\wedge\neg\exists z\,(x<z\wedge z<y).
\]

\begin{lemma}[Balanced distance formulas]\label{lem:distance}
For every $d\ge0$ there exists an $\FO[<]$ formula $\Dist_d(x,y)$ such that,
in every finite word structure, $\Dist_d(x,y)$ holds exactly when $y$ is $d$
successor-steps to the right of $x$. Moreover,
\[
\qr(\Dist_d)\le
\begin{cases}
0,&d=0,\\[1mm]
\lceil\log_2 d\rceil+1,&d\ge1.
\end{cases}
\]
\end{lemma}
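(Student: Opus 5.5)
We will construct $\Dist_d$ by strong induction on $d$, using the standard midpoint-halving recursion. For the base cases we set
\[
\Dist_0(x,y):=x=y,
\qquad
\Dist_1(x,y):=\succr(x,y).
\]
The first formula is atomic, so it has rank $0$. The second has rank $1=\lceil\log_2 1\rceil+1$. For $d\ge2$, we split $d=d_1+d_2$ with $d_1=\lceil d/2\rceil$ and $d_2=\lfloor d/2\rfloor$, and put
\[
\Dist_d(x,y):=\exists z\,\bigl(\Dist_{d_1}(x,z)\wedge\Dist_{d_2}(z,y)\bigr).
\]
Correctness is immediate. In a finite linear order, $y$ lies $d$ successor-steps to the right of $x$ exactly when some intermediate point $z$ lies $d_1$ steps right of $x$ and $y$ lies $d_2$ steps right of $z$. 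Such a $z$ is unique, and its existence is forced by the positions strictly between $x$ and $y$. The only facts needed are that "$d$ steps right" is a function composing additively and that the letter predicates play no role.

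For the rank bound, the recursion gives
\[
\qr(\Dist_d)\le 1+\max\bigl(\qr(\Dist_{d_1}),\qr(\Dist_{d_2})\bigr).
\]
Since $d_2\le d_1$ and we may assume $\qr(\Dist_{d_2})\le\qr(\Dist_{d_1})$, this reduces to checking
\[
1+\lceil\log_2\lceil d/2\rceil\rceil+1\le\lceil\log_2 d\rceil+1 .
\]
That inequality follows from the standard identity $\lceil\log_2\lceil d/2\rceil\rceil=\lceil\log_2 d\rceil-1$ for $d\ge2$.

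The main obstacle is the small cases where the split produces $d_2=0$ or $d_2=1$ and the uniform formula must be checked by hand. For $d=2$ we get $d_1=d_2=1$, giving rank $2=\lceil\log_2 2\rceil+1$. For $d=3$ we get $d_1=2$ and $d_2=1$, giving rank $1+2=3=\lceil\log_2 3\rceil+1$. Here the rank-$0$ case $\Dist_0$ never appears in the recursion, so the bound $\lceil\log_2 d\rceil+1$ is only ever invoked for arguments at least $1$.

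A secondary point is that the quantified variable $z$ must not clash with the free variables inside the subformulas. We avoid this by renaming bound variables, which does not change quantifier rank. Finally, a strong induction on $d$, justified by the monotonicity of $\lceil\log_2\cdot\rceil$, yields the claimed bound for all $d\ge1$.
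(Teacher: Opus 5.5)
Your proposal is correct and is essentially the paper's proof: the same midpoint recursion $\Dist_d(x,y)=\exists z\,(\Dist_{d_1}(x,z)\wedge\Dist_{d_2}(z,y))$ with halves $\lceil d/2\rceil,\lfloor d/2\rfloor$ (you merely put the ceiling half first), the same recurrence $\qr(\Dist_d)\le 1+\max(\cdot,\cdot)$, and the same key estimate. Your identity $\lceil\log_2\lceil d/2\rceil\rceil=\lceil\log_2 d\rceil-1$ is the paper's observation that $\lceil d/2\rceil\le 2^{k-1}$ for $k=\lceil\log_2 d\rceil$. The one thing to tidy is ``we may assume $\qr(\Dist_{d_2})\le\qr(\Dist_{d_1})$'': what you actually need, and correctly invoke at the end, is that the induction hypothesis and monotonicity of $\lceil\log_2\cdot\rceil$ bound both halves by $\lceil\log_2 d_1\rceil+1$, since $1\le d_2\le d_1$.
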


\begin{proof}
Set
\[
\Dist_0(x,y):=(x=y),
\qquad
\Dist_1(x,y):=\succr(x,y).
\]
For $d\ge2$, let
\[
d^-:=\lfloor d/2\rfloor,
\qquad
d^+:=\lceil d/2\rceil,
\]
and define
\[
\Dist_d(x,y)
:=
\exists z\,
\bigl(
\Dist_{d^-}(x,z)
\wedge
\Dist_{d^+}(z,y)
\bigr).
\]
Correctness follows by induction on $d$: a path of $d$ successor-steps from
$x$ to $y$ has a unique midpoint $z$ at distance $d^-$ from $x$ and distance
$d^+$ from $z$ to $y$.

Let $r(d):=\qr(\Dist_d)$. Then
\[
r(0)=0,\qquad r(1)=1,
\]
and, for $d\ge2$,
\[
r(d)\le1+\max\{r(d^-),r(d^+)\}.
\]
We prove
\[
r(d)\le\lceil\log_2 d\rceil+1
\]
by induction on $d\ge1$. The case $d=1$ is immediate. Let $d\ge2$ and put
$k=\lceil\log_2 d\rceil$. Then
\[
2^{k-1}<d\le2^k,
\]
and hence
\[
d^+=\lceil d/2\rceil\le2^{k-1}.
\]
If $d^+=1$, then $r(d)\le2\le k+1$. If $d^+\ge2$, the induction hypothesis
gives
\[
r(d)
\le
1+\bigl(\lceil\log_2 d^+\rceil+1\bigr)
\le
1+((k-1)+1)
=
k+1.
\]
This proves the bound.
\end{proof}

\begin{lemma}[Exact length]\label{lem:length}
For every $m\ge0$ there exists an $\FO[<]$ sentence $\lambda_m$ defining
words of length exactly $m$, with
\[
\qr(\lambda_0)=1
\]
and, for $m\ge1$,
\[
\qr(\lambda_m)\le \lceil\log_2 m\rceil+3.
\]
\end{lemma}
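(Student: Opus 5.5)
For $m=0$ I will take $\lambda_0:=\neg\exists x\,(x=x)$. It has quantifier rank $1$ and holds exactly on the empty structure.

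For $m\ge1$ I will state that the last position lies exactly $m-1$ successor steps from the first:
\[
\lambda_m:=\exists x\,\exists y\,\bigl(\first(x)\wedge\last(y)\wedge\Dist_{m-1}(x,y)\bigr).
\]
Correctness comes from Lemma~\ref{lem:distance}. In a word of length $\ell\ge1$ the first and last positions are $1$ and $\ell$. The formula $\Dist_{m-1}(1,\ell)$ holds iff $\ell-1=m-1$, that is, iff $\ell=m$. The empty word fails $\lambda_m$ because of the existential quantifiers. The formula is also sound on nonempty words, since $\first$ and $\last$ pin down $x$ and $y$ uniquely.

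For the rank count, $\first$ and $\last$ each have rank $1$, so the matrix under $\exists x\exists y$ has rank $\max\{1,\qr(\Dist_{m-1})\}$. This gives $\qr(\lambda_m)=2+\max\{1,\qr(\Dist_{m-1})\}$. There are three cases.
\begin{itemize}
\item If $m=1$, then $\Dist_0$ has rank $0$, so $\qr(\lambda_1)\le 3=\lceil\log_2 1\rceil+3$.
\item If $m=2$, then $\Dist_1=\succr$ has rank $1$, so $\qr(\lambda_2)\le 3\le 4$.
\item If $m\ge3$, Lemma~\ref{lem:distance} gives $\qr(\Dist_{m-1})\le\lceil\log_2(m-1)\rceil+1\le\lceil\log_2 m\rceil+1$, hence $\qr(\lambda_m)\le\lceil\log_2 m\rceil+3$.
\end{itemize}

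The only step needing care is this bookkeeping: making sure the small cases $m=1,2$ fit the stated bound, and that the two outer quantifiers add only $2$. One could save a quantifier by reusing variables, but that is not needed for the claimed bound.
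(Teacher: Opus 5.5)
Your proposal is correct and is essentially the paper's proof: you use the same sentences $\lambda_0=\neg\exists x\,(x=x)$ and $\lambda_m=\exists x\,\exists y\,(\first(x)\wedge\last(y)\wedge\Dist_{m-1}(x,y))$, with the same rank count $2+\max\{1,\qr(\Dist_{m-1})\}$. Your separate treatment of $m=2$ and your explicit check that the empty word fails $\lambda_m$ for $m\ge1$ are minor refinements; the paper handles all $m\ge2$ in a single estimate.
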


\begin{proof}
For $m=0$, take
\[
\lambda_0:=\neg\exists x\,(x=x).
\]
For $m\ge1$, define
\[
\lambda_m
:=
\exists f\,\exists \ell\,
\bigl(
\first(f)\wedge
\last(\ell)\wedge
\Dist_{m-1}(f,\ell)
\bigr).
\]
The formula says that the first and last positions are exactly $m-1$
successor-steps apart, hence the domain has exactly $m$ positions.

The two outer quantifiers contribute $2$. The deepest internal component has
rank at most
\[
\max\{1,\qr(\Dist_{m-1})\}.
\]
For $m=1$, this gives at most $3$. For $m\ge2$, Lemma~\ref{lem:distance}
gives
\[
\qr(\lambda_m)
\le
2+\max\{1,\lceil\log_2(m-1)\rceil+1\}
\le
\lceil\log_2 m\rceil+3.
\]
\end{proof}

\begin{lemma}[Exact-word formulas]\label{lem:exact-word}
For every word $w=a_1\cdots a_m\in\Sigma^\ast$, there exists an $\FO[<]$
sentence $\chi_w$ such that
\[
v\models\chi_w
\quad\Longleftrightarrow\quad
v=w
\]
for every word $v$. Moreover,
\[
\qr(\chi_w)\le
\begin{cases}
1,&m=0,\\[1mm]
\lceil\log_2 m\rceil+4,&m\ge1.
\end{cases}
\]
\end{lemma}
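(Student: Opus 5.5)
For $m=0$ I will take $\chi_w:=\lambda_0$ from Lemma~\ref{lem:length}, which has rank $1$ and defines exactly the empty word. For $m\ge1$ the plan is to conjoin the length sentence $\lambda_m$ with a sentence that pins down the letter at each position. The letter constraints will be located through distance formulas measured from the first position.

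Concretely, for $m\ge1$ I define
\[
\chi_w:=\lambda_m\wedge\exists f\,\Bigl(\first(f)\wedge\bigwedge_{i=1}^{m}\exists x\,\bigl(\Dist_{i-1}(f,x)\wedge P_{a_i}(x)\bigr)\Bigr).
\]
Correctness is argued as follows. If $v\models\chi_w$, then $|v|=m$ by Lemma~\ref{lem:length}. In a word of length $m$, the position at distance $i-1$ from the first position is exactly position $i$ (Lemma~\ref{lem:distance}). Hence $v_i=a_i$ for all $i$, and so $v=w$. The converse direction is immediate.

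For the rank, $\lambda_m$ contributes at most $\lceil\log_2 m\rceil+3$. The second conjunct has rank
\[
2+\max_{1\le i\le m}\max\{1,\qr(\Dist_{i-1})\},
\]
where the inner $1$ accounts for $\first(f)$. For $i-1\ge1$ we have $\qr(\Dist_{i-1})\le\lceil\log_2(m-1)\rceil+1\le\lceil\log_2 m\rceil+1$. So the second conjunct has rank at most $\lceil\log_2 m\rceil+3$, and overall
\[
\qr(\chi_w)\le\lceil\log_2 m\rceil+3\le\lceil\log_2 m\rceil+4.
\]
The small cases need a separate check. When $m=1$ the bound $\max\{1,\cdot\}$ gives $2+1=3\le 4$. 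When $m=2$, $\qr(\Dist_1)=1$ gives $3\le 5$.

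The only delicate point is bookkeeping. The quantifier for $f$ is shared across all the conjuncts, and the witness quantifier for each position adds exactly one more level on top of the distance formula, so the rank stays within the stated slack. An alternative would be to reuse the quantifiers $f,\ell$ already introduced inside $\lambda_m$, but the separate conjunction is simpler and already fits the bound with one unit of room.
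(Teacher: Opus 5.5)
Your proof is correct and is essentially the paper's construction: the length is fixed via $\first$, $\last$ and $\Dist_{m-1}$, and each letter is pinned by $\exists x\,(\Dist_{i-1}(f,x)\wedge P_{a_i}(x))$ measured from the first position. The only difference is that you keep $\lambda_m$ as a separate conjunct, whereas the paper nests the letter witnesses under both $\exists f\,\exists\ell$; this saves one quantifier level and in fact gives the sharper bound $\qr(\chi_w)\le\lceil\log_2 m\rceil+3$.
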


\begin{proof}
For $m=0$, set
\[
\chi_\eps:=\lambda_0.
\]
Assume $m\ge1$. Define
\[
\chi_w
:=
\exists f\,\exists \ell
\Biggl(
\first(f)\wedge
\last(\ell)\wedge
\Dist_{m-1}(f,\ell)
\wedge
\bigwedge_{i=1}^{m}
\exists x_i\,
\bigl(
\Dist_{i-1}(f,x_i)\wedge P_{a_i}(x_i)
\bigr)
\Biggr).
\]
The first three conjuncts force the domain to have exactly $m$ positions.
The conjunct indexed by $i$ forces the unique position at distance $i-1$
from the first position to satisfy $P_{a_i}$. Hence the word is exactly
$a_1\cdots a_m$.

For rank, the outer quantifiers over $f,\ell$ contribute $2$. Each inner
witness $x_i$ contributes one additional quantifier along that branch. The
largest distance used has parameter at most $m-1$. Thus, for $m\ge2$,
\[
\qr(\chi_w)
\le
3+\max_{0\le d\le m-1}\qr(\Dist_d)
\le
3+\lceil\log_2(m-1)\rceil+1
\le
\lceil\log_2 m\rceil+4.
\]
The case $m=1$ satisfies the same bound.
\end{proof}

\begin{theorem}[Universal logarithmic upper bound]\label{thm:upper}
For every finite alphabet $\Sigma$, every language $L\subseteq\Sigma^\ast$,
and every $n\ge1$,
\[
\rankprof_L(n)\le \lceil\log_2 n\rceil+4.
\]
\end{theorem}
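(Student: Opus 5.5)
The plan is to use Proposition~\ref{prop:saturation}, direction (iii)$\Rightarrow$(i). It therefore suffices to exhibit an $\FO[<]$ sentence $\varphi_n$ of quantifier rank at most $\lceil\log_2 n\rceil+4$ that agrees with membership in $L$ on every word of $\Sigma^{\le n}$.

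The natural candidate is the finite disjunction of exact-word formulas from Lemma~\ref{lem:exact-word},
\[
\varphi_n:=\bigvee_{w\in L\cap\Sigma^{\le n}}\chi_w .
\]
If $L\cap\Sigma^{\le n}$ is empty, we instead take a rank-$0$ contradiction, or simply note that $\rankprof_L(n)=0$ by definition. The disjunction is finite because $\Sigma^{\le n}$ is finite. No regularity or uniformity in $n$ is needed, since formula size is unrestricted.

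Correctness is immediate from the defining property of $\chi_w$. A word $v\in\Sigma^{\le n}$ satisfies $\varphi_n$ iff $v=w$ for some $w\in L\cap\Sigma^{\le n}$, iff $v\in L$. In fact $\varphi_n$ defines $L\cap\Sigma^{\le n}$ exactly on all of $\Sigma^\ast$, which is more than is required.

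For the rank, a disjunction has quantifier rank equal to the maximum over its disjuncts, so it suffices to bound $\qr(\chi_w)$ for every $|w|=m\le n$:
\begin{itemize}
\item If $m=0$, Lemma~\ref{lem:exact-word} gives rank $1$, which is at most $\lceil\log_2 n\rceil+4$ because $n\ge1$.
\item If $1\le m\le n$, Lemma~\ref{lem:exact-word} gives rank at most $\lceil\log_2 m\rceil+4\le\lceil\log_2 n\rceil+4$, since $\lceil\log_2\cdot\rceil$ is monotone.
\end{itemize}

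There is no real obstacle here, since the work was done in Lemmas~\ref{lem:distance}--\ref{lem:exact-word}. The only points that need care are the degenerate edge cases: the empty word, $n=1$, and $L\cap\Sigma^{\le n}$ empty or all of $\Sigma^{\le n}$.
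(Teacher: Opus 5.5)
Your proposal is correct and follows the paper's proof essentially verbatim. Like the paper, you take the disjunction of the exact-word sentences $\chi_w$ over $w\in L\cap\Sigma^{\le n}$ (or a rank-$0$ contradiction when this set is empty), bound its rank by Lemma~\ref{lem:exact-word} (rank $1$ for the empty word, and monotonicity of $\lceil\log_2 m\rceil$ otherwise), and conclude via Proposition~\ref{prop:saturation}.
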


\begin{proof}
For each $w\in L\cap\Sigma^{\le n}$, let $\chi_w$ be the exact-word sentence
from Lemma~\ref{lem:exact-word}. Define
\[
\varphi_{L,n}
:=
\bigvee_{w\in L\cap\Sigma^{\le n}}\chi_w.
\]
If $L\cap\Sigma^{\le n}=\varnothing$, take $\varphi_{L,n}$ to be a
contradiction of rank $0$. Otherwise, for every $v\in\Sigma^{\le n}$,
\[
v\models\varphi_{L,n}
\quad\Longleftrightarrow\quad
v\in L.
\]
Every nonempty disjunct has quantifier rank at most
$\lceil\log_2 n\rceil+4$, and the empty-word disjunct has rank $1$. Hence the
whole disjunction has quantifier rank at most $\lceil\log_2 n\rceil+4$.
Proposition~\ref{prop:saturation} gives the claim.
\end{proof}

\begin{remark}[Rank versus size]\label{rem:rank-versus-size}
Theorem~\ref{thm:upper} is purely about rank. With formulas written as trees,
the construction may have size exponential in $n$. Indeed,
\[
|\{w\in\Sigma^\ast:|w|\le n\}|
=
\sum_{m=0}^{n}|\Sigma|^m
=
\begin{cases}
n+1,&|\Sigma|=1,\\[1mm]
\dfrac{|\Sigma|^{n+1}-1}{|\Sigma|-1},&|\Sigma|\ge2.
\end{cases}
\]
Thus the disjunction in $\varphi_{L,n}$ may contain exponentially many
exact-word components.
\end{remark}

\section{Bounded profiles and separator profiles}

\begin{theorem}[Bounded profile theorem]\label{thm:bounded}
For every language $L\subseteq\Sigma^\ast$,
\[
\sup_{n\in\N}\rankprof_L(n)<\infty
\]
if and only if $L$ is definable in $\FO[<]$. More precisely,
\[
\sup_{n\in\N}\rankprof_L(n)
=
\min\{\qr(\varphi):\varphi\in\FO[<]\text{ defines }L\},
\]
with value $+\infty$ if no such $\varphi$ exists.
\end{theorem}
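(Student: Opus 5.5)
I will prove the more precise equality; the ``if and only if'' follows because the right-hand side is finite exactly when some $\FO[<]$ sentence defines $L$. Write $Q:=\min\{\qr(\varphi):\varphi\text{ defines }L\}\in\N\cup\{+\infty\}$. There are two inequalities to establish, and both go through Proposition~\ref{prop:saturation}.

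\emph{Step 1: $\sup_n\rankprof_L(n)\le Q$.} If $Q=\infty$ there is nothing to prove. Otherwise fix $\varphi$ with $\qr(\varphi)=Q$ defining $L$. For every $n$, the same $\varphi$ classifies $L$ correctly on $\Sigma^{\le n}$. By the implication (iii)$\Rightarrow$(i) of Proposition~\ref{prop:saturation}, $\rankprof_L(n)\le Q$ for all $n$.

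\emph{Step 2: $Q\le\sup_n\rankprof_L(n)$.} This step uses the finite index of $\equiv_q$ from Fact~\ref{fact:types}. Suppose $q:=\sup_n\rankprof_L(n)<\infty$; otherwise the inequality is trivial. Then $\rankprof_L(n)\le q$ for every $n$, so by (i)$\Rightarrow$(ii) the language $L\cap\Sigma^{\le n}$ is $\equiv_q$-saturated inside $\Sigma^{\le n}$ for every $n$. I claim $L$ is globally $\equiv_q$-saturated. Given $u\equiv_q v$, take $n:=\max\{|u|,|v|\}$; both words lie in $\Sigma^{\le n}$, so $u\in L\iff v\in L$. Hence $L$ is a union of $\equiv_q$-classes. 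There are finitely many such classes, so
\[
\varphi:=\bigvee_{C\subseteq L}\tau_{u_C,q}
\]
(a contradiction if the union is empty) is a finite disjunction of Hintikka sentences of rank at most $q$. This $\varphi$ defines $L$, giving $Q\le q$.

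\emph{Main obstacle.} The only real subtlety is in Step 2. One must not try to glue the horizon-dependent sentences $\varphi_{q,n,L}$ by a limit argument. The saturation formulation avoids this, because every pair of words already lives inside some finite ball. Finiteness of the index of $\equiv_q$ is what makes the global disjunction a genuine $\FO[<]$ sentence rather than an infinite one. The attainment of the minimum is automatic, since ranks are natural numbers.
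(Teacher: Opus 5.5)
Your proposal is correct and essentially matches the paper's proof. For $\sup_n\rankprof_L(n)\le Q$, you use one global definition at every horizon. For the reverse inequality, you show $L$ is $\equiv_q$-saturated by taking $n=\max\{|u|,|v|\}$, and then define $L$ by the finite disjunction of the Hintikka sentences $\tau_{u_C,q}$ over the classes contained in $L$.
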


\begin{proof}
If $\varphi$ defines $L$ and $\qr(\varphi)=q$, then $\varphi$ classifies $L$
correctly on every $\Sigma^{\le n}$. Hence
\[
\rankprof_L(n)\le q
\qquad(n\in\N).
\]
Thus
\[
\sup_n\rankprof_L(n)
\le
\min\{\qr(\varphi):\varphi\text{ defines }L\}.
\]

Conversely, suppose
\[
\sup_n\rankprof_L(n)\le q.
\]
We show that $L$ is saturated under $\equiv_q$ on all of $\Sigma^\ast$.
Let
\[
u\equiv_q v.
\]
Put
\[
n=\max\{|u|,|v|\}.
\]
If $u$ and $v$ had different membership in $L$, the definition of
$\rankprof_L(n)$ would imply
\[
u\not\equiv_q v,
\]
a contradiction. Hence
\[
u\equiv_q v
\quad\Longrightarrow\quad
(u\in L\Longleftrightarrow v\in L).
\]
Therefore $L$ is a union of $\equiv_q$-classes. Since $\equiv_q$ has finite
index by Fact~\ref{fact:types}, $L$ is defined by
\[
\bigvee_{C\subseteq L}\tau_{u_C,q},
\]
where $C$ ranges over the $\equiv_q$-classes contained in $L$ and $u_C$ is a
representative of $C$. This sentence has quantifier rank at most $q$.

Taking the least possible $q$ gives the equality.
\end{proof}

\begin{definition}[Separator profile]\label{def:separator-profile}
Let $K,H\subseteq\Sigma^\ast$ be disjoint. Define
\[
\sepprof_{K,H}(n)
=
\min\Bigl\{
q\in\N:
\forall u\in K\cap\Sigma^{\le n},
\forall v\in H\cap\Sigma^{\le n},
u\not\equiv_q v
\Bigr\}.
\]
\end{definition}

\begin{theorem}[Bounded separator theorem]\label{thm:separator}
Let $K,H\subseteq\Sigma^\ast$ be disjoint. Then
\[
\sup_n\sepprof_{K,H}(n)<\infty
\]
if and only if there exists an $\FO[<]$ sentence $\psi$ such that
\[
K\subseteq\Lang(\psi),
\qquad
H\cap\Lang(\psi)=\varnothing.
\]
Moreover,
\[
\sup_n\sepprof_{K,H}(n)
=
\min\{\qr(\psi):\psi\text{ is an }\FO[<]\text{ separator of }K\text{ and }H\},
\]
with value $+\infty$ if no separator exists.
\end{theorem}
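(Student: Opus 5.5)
The argument mirrors the proof of Theorem~\ref{thm:bounded}, with the language $L$ and its complement replaced by the disjoint pair $K,H$; we show the two inequalities separately.

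\emph{Easy direction.} Let $\psi$ be a separator with $\qr(\psi)=q$. Take $u\in K$ and $v\in H$. Then $u\models\psi$ while $v\not\models\psi$, so $u\not\equiv_q v$. This holds at every horizon $n$, hence $\sepprof_{K,H}(n)\le q$ for all $n$. Taking the minimum over separators gives
\[
\sup_n\sepprof_{K,H}(n)\le\min\{\qr(\psi):\psi\text{ separates }K\text{ and }H\}.
\]
In particular the supremum is finite whenever some separator exists.

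\emph{Converse.} Suppose $\sup_n\sepprof_{K,H}(n)\le q$. First we claim that no pair $u\in K$, $v\in H$ satisfies $u\equiv_q v$. If such a pair existed, put $n=\max\{|u|,|v|\}$. Both words lie in $\Sigma^{\le n}$, so the definition of $\sepprof_{K,H}(n)\le q$ forces $u\not\equiv_q v$, a contradiction.

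Now let $\mathcal C$ be the set of $\equiv_q$-classes that meet $K$. By Fact~\ref{fact:types} there are finitely many classes, so $\mathcal C$ is finite. Choose a representative $u_C$ in each $C\in\mathcal C$ and define
\[
\psi:=\bigvee_{C\in\mathcal C}\tau_{u_C,q}.
\]
If $K=\varnothing$, take $\psi$ to be a rank-$0$ contradiction instead. In either case $\qr(\psi)\le q$. Every word of $K$ lies in some class of $\mathcal C$, so $K\subseteq\Lang(\psi)$. If $v\in H$ satisfied $\psi$, then $v\equiv_q u$ for some $u\in K$, contradicting the claim above. Hence $H\cap\Lang(\psi)=\varnothing$, and $\psi$ is a separator of rank at most $q$.

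Applying this with $q=\sup_n\sepprof_{K,H}(n)$ gives the reverse inequality, and therefore equality. If no separator exists, the converse shows the supremum cannot be finite, so both sides equal $+\infty$.

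The only point needing care is the one already handled in Theorem~\ref{thm:bounded}. The finite-horizon condition at every $n$ must give global non-equivalence, and this works because any given pair of words fits inside a single finite ball $\Sigma^{\le n}$. Everything else is the finite-index Hintikka construction from Fact~\ref{fact:types}.
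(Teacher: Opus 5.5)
Your proof is correct and follows the paper's argument essentially step for step: the easy direction uses $u\models\psi$, $v\not\models\psi$, and the converse lifts finite-horizon non-equivalence to global non-equivalence and then takes the disjunction of Hintikka sentences over the $\equiv_q$-classes meeting $K$. Your explicit handling of $K=\varnothing$ by a rank-$0$ contradiction is a small detail the paper leaves implicit.
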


\begin{proof}
If $\psi$ separates $K$ from $H$ and $\qr(\psi)=q$, then no word of $K$ and
no word of $H$ can be $\equiv_q$-equivalent. Hence
\[
\sepprof_{K,H}(n)\le q
\qquad(n\in\N).
\]

Conversely, suppose
\[
\sup_n\sepprof_{K,H}(n)\le q.
\]
No $\equiv_q$-class can meet both $K$ and $H$. Indeed, if $u\in K$,
$v\in H$, and $u\equiv_q v$, then for
\[
n=\max\{|u|,|v|\}
\]
we contradict the definition of $\sepprof_{K,H}(n)$.

Let $\mathcal C$ be the set of $\equiv_q$-classes meeting $K$. Then
\[
\psi
=
\bigvee_{C\in\mathcal C}\tau_{u_C,q}
\]
contains $K$ and avoids $H$. The rank is at most $q$. Taking the least
possible $q$ gives the equality.
\end{proof}

For regular pairs, first-order separability is decidable and has algebraic
characterizations \cite{placezeitoun2016,placezeitoun2019}. Theorem
\ref{thm:separator} is not a decidability statement; it identifies exactly
what bounded finite-horizon separator rank means.

\section{Powers and first-order rank}

The regular lower bound requires a rank estimate for powers of a fixed
nonempty word.

\begin{lemma}[Block-power equivalence]\label{lem:block-power}
Let $x\in\Sigma^+$ be nonempty. If
\[
m,m'\ge2^q,
\]
then
\[
x^m\equiv_q x^{m'}.
\]
\end{lemma}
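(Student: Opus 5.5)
The plan is to reduce the game on $x^m$ and $x^{m'}$ to the game on the pure linear orders $I_m$ and $I_{m'}$ of block indices, and then invoke Fact~\ref{fact:linearorders}. Let $p=|x|\ge1$. Every position of $x^m$ can be written uniquely as a pair $(i,j)$ with block index $1\le i\le m$ and offset $1\le j\le p$; this pair corresponds to position $(i-1)p+j$. The same holds for $x^{m'}$ with $1\le i\le m'$. In these coordinates the structure has a simple description. The order on positions is lexicographic in $(i,j)$. The letter at $(i,j)$ is the $j$-th letter of $x$, so it depends only on $j$.

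Since $m,m'\ge 2^q$, Fact~\ref{fact:linearorders} gives a winning strategy $\mathcal S$ for Duplicator in the $q$-round game on $I_m$ and $I_{m'}$. In the game on $x^m$ and $x^{m'}$, Duplicator answers as follows. Suppose Spoiler plays $(i,j)$ in one word. Duplicator feeds the block index $i$ to $\mathcal S$ as a move in the corresponding order, obtains the response $i'$, and answers with $(i',j)$, keeping the same offset. This is always a legal position, because the offset range $\{1,\dots,p\}$ is the same in both words. If Spoiler replays a block index that was used before, this is simply a repeated move in the auxiliary linear-order game. A winning strategy answers such a move with the previously matched index, since it must preserve equality, so nothing special is needed.

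After $q$ rounds, let the pebbled pairs be $(i_k,j_k)$ and $(i'_k,j_k)$. Because $\mathcal S$ wins, the map $i_k\mapsto i'_k$ is a partial isomorphism of linear orders. Hence $i_k<i_l\iff i'_k<i'_l$ and $i_k=i_l\iff i'_k=i'_l$. The offsets are identical on both sides, so lexicographic comparison of $(i_k,j_k)$ versus $(i_l,j_l)$ has the same outcome as that of $(i'_k,j_k)$ versus $(i'_l,j_l)$. This preserves both $<$ and $=$. Letters agree because the letter depends only on the offset $j_k$. Therefore the pebbled positions form a partial isomorphism, Duplicator wins, and $x^m\equiv_q x^{m'}$.

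The only step that needs care is making the lifted strategy coherent: each round of the word game must correspond to exactly one round of the order game, including repeated block indices. This is what keeps the round count at $q$, so that the hypothesis $m,m'\ge 2^q$ from Fact~\ref{fact:linearorders} suffices. I would state this correspondence explicitly rather than re-prove the interval invariant.
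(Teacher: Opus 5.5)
Your proposal is correct and follows essentially the same route as the paper's proof. Both decompose positions as (block index, offset) with lexicographic order and letters depending only on the offset, lift Duplicator's winning strategy for $I_m$ versus $I_{m'}$ from Fact~\ref{fact:linearorders} to block indices while copying offsets, and conclude that order, equality and letters are preserved. Your explicit remark on repeated block indices is a small added detail that the paper leaves implicit.
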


\begin{proof}
Write
\[
x=b_1b_2\cdots b_\ell,
\qquad
\ell=|x|\ge1.
\]
Every position in $x^m$ is uniquely represented as a pair
\[
(i,j)
\quad
(1\le i\le m,\ 1\le j\le\ell),
\]
where $i$ is the block index and $j$ is the offset inside the block. The order
is lexicographic:
\[
(i,j)<(i',j')
\quad\Longleftrightarrow\quad
i<i'
\text{ or }(i=i'\text{ and }j<j').
\]
The letter at $(i,j)$ is $b_j$, independent of $i$.

By Fact~\ref{fact:linearorders}, Duplicator wins the $q$-round game on $I_m$
and $I_{m'}$. In the game on $x^m$ and $x^{m'}$, Duplicator follows that
strategy on block indices and copies offsets. If Spoiler plays $(i,j)$,
Duplicator answers with $(i',j)$, where $i'$ is the response to $i$ in the
pure-order game.

For any two pebbled positions $(i,j)$ and $(k,\ell')$, the order relation is
determined by the order and equality relation between $i$ and $k$, together
with the fixed offsets $j$ and $\ell'$. The pure-order strategy preserves
order and equality of block indices, while the offsets are copied exactly.
Therefore equality, order, and letter predicates are preserved among pebbled
positions. Duplicator wins the $q$-round game on $x^m,x^{m'}$.
\end{proof}

\begin{corollary}[Contexted power equivalence]\label{cor:context-power}
Let
\[
r,s\in\Sigma^\ast,
\qquad
x\in\Sigma^+.
\]
If
\[
m,m'\ge2^q,
\]
then
\[
r x^m s\equiv_q r x^{m'}s.
\]
\end{corollary}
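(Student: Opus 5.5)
The corollary follows directly from Lemma~\ref{lem:block-power} and the congruence property in Lemma~\ref{lem:composition}; no new game argument is needed.

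First, since $x\in\Sigma^+$ and $m,m'\ge 2^q$, Lemma~\ref{lem:block-power} gives $x^m\equiv_q x^{m'}$.

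Second, $\equiv_q$ is reflexive, so $r\equiv_q r$ and $s\equiv_q s$. Lemma~\ref{lem:composition} says $\equiv_q$ is a congruence on $\Sigma^\ast$, so it is compatible with concatenation on both sides. Applying it once to the pair $r\equiv_q r$, $x^m\equiv_q x^{m'}$ yields
\[
r x^m\equiv_q r x^{m'} .
\]
Applying it again to this relation and $s\equiv_q s$ yields
\[
r x^m s\equiv_q r x^{m'} s .
\]
The empty-word cases $r=\eps$ or $s=\eps$ need no separate treatment, since concatenation with $\eps$ is the identity. Equivalently, one can use the composed Duplicator strategy directly: copy moves in the $r$ and $s$ parts identically, and use the block-power strategy in the middle part, exactly as in the proof of Lemma~\ref{lem:composition}.

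There is essentially no obstacle here. The only point to check is that the composition lemma is applied to concatenations of three factors. This is handled by the two successive applications above, using associativity of concatenation.
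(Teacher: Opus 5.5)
Your proof is correct and is essentially the paper's argument: Lemma~\ref{lem:block-power} gives $x^m\equiv_q x^{m'}$, and the congruence property from Lemma~\ref{lem:composition} lifts this to $r x^m s\equiv_q r x^{m'}s$. You additionally spell out the use of reflexivity on $r$ and $s$ and the two successive applications, which the paper leaves implicit.
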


\begin{proof}
By Lemma~\ref{lem:block-power},
\[
x^m\equiv_q x^{m'}.
\]
By Lemma~\ref{lem:composition}, $\equiv_q$ is a congruence. Hence
\[
r x^m s\equiv_q r x^{m'}s.
\]
\end{proof}

\section{The regular aperiodicity gap}

We now prove the main theorem. The algebraic part is separated from the
first-order part.

\begin{lemma}[Syntactic cycle extraction]\label{lem:syntactic-extraction}
Let $L\subseteq\Sigma^\ast$ be regular and suppose $\Syn(L)$ is not
aperiodic. Then there exist
\[
r,s\in\Sigma^\ast,
\qquad
x\in\Sigma^+,
\]
integers
\[
h\ge0,
\qquad
p\ge2,
\]
and residues
\[
i,j\in\{0,\dots,p-1\}
\]
such that, after possibly interchanging the two membership values,
\[
r x^{h+i+tp}s\in L
\quad\text{and}\quad
r x^{h+j+tp}s\notin L
\]
for every $t\ge0$.
\end{lemma}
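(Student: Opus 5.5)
Since $\Syn(L)$ is not aperiodic, the plan is to find an element $m\in\Syn(L)$ whose cyclic subsemigroup $\{m,m^2,\dots\}$ has period $p\ge2$. Write $M=\Syn(L)$ and $\alpha=\alpha_L$. In a finite monoid every element has an index $h_0\ge1$ and a period $p_0\ge1$ with $m^{h_0+p_0}=m^{h_0}$. The elements $m^{h_0},\dots,m^{h_0+p_0-1}$ form a cyclic group of order $p_0$. If every element had $p_0=1$, then taking $\omega$ to be a common multiple of all indices large enough would give $m^\omega=m^{\omega+1}$ for all $m$, i.e.\ aperiodicity. So some $m$ has minimal period $p:=p_0\ge2$. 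Since $\alpha$ is surjective, pick $x\in\Sigma^\ast$ with $\alpha(x)=m$. Then $x$ is nonempty, because $\alpha(\eps)=1$ is idempotent and so has period $1$. Set $h:=h_0$.

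The central step is to use the syntactic congruence to separate two powers. Minimality of $p$ makes the elements $m^{h},m^{h+1},\dots,m^{h+p-1}$ pairwise distinct, so in particular $m^{h}\neq m^{h+1}$. Choose words $x^{h}$ and $x^{h+1}$ mapping to these two distinct elements. By definition of $\Syn(L)$ as the quotient by $\equiv_L$, distinct images mean $x^h\not\equiv_L x^{h+1}$. Hence there exist $r,s\in\Sigma^\ast$ such that exactly one of $rx^hs$ and $rx^{h+1}s$ lies in $L$. Take $i=0$ and $j=1$, both in $\{0,\dots,p-1\}$ since $p\ge2$. After possibly swapping roles (the ``interchanging'' clause), we may assume $rx^{h}s\in L$ and $rx^{h+1}s\notin L$.

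Periodicity propagates this to all $t\ge0$. For $k\ge h$ we have $m^{k+tp}=m^k$ by induction on $t$ from $m^{h+p}=m^h$. So $\alpha(rx^{h+i+tp}s)=\alpha(r)m^{h+i}\alpha(s)=\alpha(rx^{h+i}s)$, and likewise for $j$. Since $L=\alpha^{-1}(F_L)$, membership depends only on the image, which gives the claimed statement for every $t$.

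The main obstacle is the first step: justifying cleanly that non-aperiodicity forces an element of period at least $2$. This requires the index/period decomposition in finite monoids and the argument that uniform period $1$ yields a single $\omega$ (for instance $\omega=\max$ of the indices, which works because $m^{h_0}=m^{h_0+1}$ implies $m^k=m^{k+1}$ for all $k\ge h_0$). Everything after that is bookkeeping with the syntactic congruence.
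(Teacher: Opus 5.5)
Your proof is correct and follows essentially the same route as the paper: take an element of $\Syn(L)$ with period $p\ge2$, lift it to a word $x$ (nonempty because the identity has period $1$), separate two distinct powers in the cycle by a context, and propagate the separation using $m^{k+p}=m^k$ for $k\ge h$. The differences are cosmetic. You fix $i=0$, $j=1$ and take the context $(r,s)$ directly from $x^h\not\equiv_L x^{h+1}$, where the paper instead lifts monoid contexts $m_\ell,m_r$ to words. You also spell out why non-aperiodicity yields an element of period at least $2$, which the paper simply asserts.
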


\begin{proof}
Let
\[
M=\Syn(L),
\qquad
\alpha=\alpha_L,
\qquad
F=F_L.
\]
Since $M$ is not aperiodic, there exists $a\in M$ whose power sequence has
eventual period $p\ge2$. Let $h=\powind(a)$ and $p=\powper(a)$. Then
\[
a^{k+p}=a^k
\qquad(k\ge h),
\]
and the cycle
\[
\Gamma(a):=\{a^h,a^{h+1},\dots,a^{h+p-1}\}
\]
has at least two elements.

Choose residues $i,j\in\{0,\dots,p-1\}$ with
\[
a^{h+i}\ne a^{h+j}.
\]
Because $M$ is the syntactic monoid of $L$, distinct elements are separated
by contexts. Hence there exist $m_\ell,m_r\in M$ such that
\[
m_\ell a^{h+i}m_r\in F
\quad\text{and}\quad
m_\ell a^{h+j}m_r\notin F,
\]
or the same statement with the membership values interchanged.

Since $\alpha$ is onto, choose words $r,s\in\Sigma^\ast$ such that
\[
\alpha(r)=m_\ell,
\qquad
\alpha(s)=m_r.
\]
Choose $x\in\Sigma^\ast$ with
\[
\alpha(x)=a.
\]
The empty word maps to the identity of $M$. Since $\powper(a)\ge2$, the
element $a$ is not the identity. Therefore no representative of $a$ is empty,
and we may take $x\in\Sigma^+$.

For every $t\ge0$,
\[
a^{h+i+tp}=a^{h+i},
\qquad
a^{h+j+tp}=a^{h+j}.
\]
Therefore
\[
\alpha(r x^{h+i+tp}s)
=
m_\ell a^{h+i}m_r,
\]
and
\[
\alpha(r x^{h+j+tp}s)
=
m_\ell a^{h+j}m_r.
\]
The desired membership distinction follows.
\end{proof}

\begin{lemma}[Logarithmic obstruction from a contexted cycle]\label{lem:cycle-lower}
Assume there are
\[
r,s\in\Sigma^\ast,
\quad
x\in\Sigma^+,
\quad
h\ge0,
\quad
p\ge2
\]
such that membership of $r x^k s$ in $L$ is nonconstant on the residue classes
of $k$ modulo $p$ for all $k\ge h$. Put
\[
C:=|r|+|s|,
\qquad
\ell:=|x|.
\]
For every $n$ such that
\[
K(n):=\left\lfloor\frac{n-C}{\ell}\right\rfloor
\]
satisfies
\[
K(n)-p+1\ge h
\quad\text{and}\quad
K(n)-p+1\ge1,
\]
one has
\[
\rankprof_L(n)
\ge
\left\lfloor
\log_2(K(n)-p+1)
\right\rfloor+1.
\]
\end{lemma}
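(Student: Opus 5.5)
We exhibit two words in $\Sigma^{\le n}$ with different membership in $L$ that are $\equiv_q$-equivalent for $q:=\lfloor\log_2(K(n)-p+1)\rfloor$. Proposition~\ref{prop:max-obstruction} (equivalently, Proposition~\ref{prop:saturation}) then gives $\rankprof_L(n)\ge q+1$.

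Put $N:=K(n)-p+1$. By hypothesis $N\ge h$ and $N\ge1$. The $p$ consecutive exponents
\[
N,\ N+1,\ \dots,\ N+p-1=K(n)
\]
are all at least $h$ and form a complete residue system modulo $p$. Membership of $r x^k s$ in $L$ is nonconstant on residue classes for $k\ge h$. Hence there are two exponents $k_1,k_2$ in this window with $r x^{k_1}s\in L$ and $r x^{k_2}s\notin L$.

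If the hypothesis is read in the form produced by Lemma~\ref{lem:syntactic-extraction}, this is immediate. That lemma supplies residues $i\ne j$ with fixed, distinct membership along the progressions $h+i+tp$ and $h+j+tp$, and each progression meets every window of $p$ consecutive integers $\ge h$. Under the weaker literal reading, where only nonconstancy on residues is assumed, I would first note a consequence of regularity. Membership of $r x^k s$ is eventually periodic with some period dividing a multiple of $p$. I would therefore interpret the hypothesis as: membership depends only on $k \bmod p$ for $k\ge h$, and is not constant. This makes the window argument valid.

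Both words have length at most $C+\ell K(n)\le n$, by the definition of $K(n)$, so both lie in $\Sigma^{\le n}$. Both exponents are at least $N\ge 2^{q}$, because $q=\lfloor\log_2 N\rfloor$. Corollary~\ref{cor:context-power} then yields $r x^{k_1}s\equiv_q r x^{k_2}s$. This is an accepted/rejected pair in $\Sigma^{\le n}$ with $\qdist> q$, which gives $\rankprof_L(n)\ge q+1$ as claimed.

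The main obstacle is the careful reading of the nonconstancy hypothesis: we need a window of $p$ consecutive exponents, all at least $N$, that is guaranteed to contain both membership values. The rest is bookkeeping on lengths and the inequality $N\ge 2^{\lfloor\log_2 N\rfloor}$.
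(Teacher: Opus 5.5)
Your proof is correct and follows the paper's argument: the window $\{K(n)-p+1,\dots,K(n)\}$ is a complete residue system lying above $h$, so it contains an accepted exponent and a rejected one. Both resulting words have length at most $n$, and Corollary~\ref{cor:context-power} with $q=\lfloor\log_2(K(n)-p+1)\rfloor$ makes them $\equiv_q$-equivalent, which gives $\rankprof_L(n)\ge q+1$. Your reading of the hypothesis (for $k\ge h$, membership depends only on $k \bmod p$ and is not constant, as supplied by Lemma~\ref{lem:syntactic-extraction}) is the one the paper uses. The aside about regularity is unnecessary and does not work as a justification: the lemma does not assume $L$ is regular, and eventual periodicity with some other period would not by itself make the window argument valid.
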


\begin{proof}
The interval
\[
B_n:=\{K(n)-p+1,
K(n)-p+2,
\dots,K(n)\}
\]
has length $p$, lies above $h$, and contains one representative of every
residue class modulo $p$. By the nonconstant-residue assumption, there exist
\[
k,k'\in B_n
\]
such that
\[
r x^k s\in L,
\qquad
r x^{k'}s\notin L.
\]
Both words have length at most $n$, since
\[
|r x^k s|
=
C+\ell k
\le
C+\ell K(n)
\le n,
\]
and similarly for $k'$.

Let
\[
q=
\left\lfloor
\log_2(K(n)-p+1)
\right\rfloor.
\]
Then
\[
2^q\le K(n)-p+1\le k,k'.
\]
By Corollary~\ref{cor:context-power},
\[
r x^k s\equiv_q r x^{k'}s.
\]
Thus rank $q$ does not classify $L$ correctly on $\Sigma^{\le n}$. Hence
\[
\rankprof_L(n)
\ge q+1
=
\left\lfloor
\log_2(K(n)-p+1)
\right\rfloor+1.
\]
\end{proof}

\begin{theorem}[Finite-horizon aperiodicity gap]\label{thm:aperiodicity-gap}
Let $L\subseteq\Sigma^\ast$ be regular. Then exactly one of the following
holds:
\begin{enumerate}[label=\textup{(\roman*)},leftmargin=2.4em]
\item $\Syn(L)$ is aperiodic, equivalently $L\in\FO[<]$, and
\[
\rankprof_L(n)=O(1).
\]
\item $\Syn(L)$ is not aperiodic, and
\[
\rankprof_L(n)=\log_2 n+O_L(1).
\]
More explicitly, there are constants
\[
C\ge0,
\qquad
\ell\ge1,
\qquad
p\ge2,
\qquad
N_0\ge1
\]
such that, for every $n\ge N_0$,
\[
\rankprof_L(n)
\ge
\left\lfloor
\log_2
\left(
\left\lfloor\frac{n-C}{\ell}\right\rfloor-p+1
\right)
\right\rfloor+1.
\]
At the same time,
\[
\rankprof_L(n)
\le
\lceil\log_2 n\rceil+4
\qquad(n\ge1).
\]
\end{enumerate}
\end{theorem}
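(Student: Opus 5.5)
The proof splits according to whether $\Syn(L)$ is aperiodic. The two cases are mutually exclusive by definition, so exactly one of (i), (ii) holds; the remaining work is to establish the growth estimate in each case.

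\emph{Case (i).} Suppose $\Syn(L)$ is aperiodic. By the Sch\"utzenberger--McNaughton--Papert theorem recalled in the preliminaries, $L\in\FO[<]$. Theorem~\ref{thm:bounded} then gives $\sup_n\rankprof_L(n)<\infty$. Its proof gives the explicit bound $\rankprof_L(n)\le\qr(\varphi)$ for any defining sentence $\varphi$, so $\rankprof_L(n)=O(1)$.

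\emph{Case (ii), upper bound.} The upper bound $\rankprof_L(n)\le\lceil\log_2 n\rceil+4$ is exactly Theorem~\ref{thm:upper}.

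\emph{Case (ii), lower bound.} Apply Lemma~\ref{lem:syntactic-extraction} to obtain $r,s,x,h,p$ and residues $i,j$ such that $rx^{h+i+tp}s\in L$ and $rx^{h+j+tp}s\notin L$ for all $t\ge0$ (after possibly interchanging the membership values, which is harmless since $\rankprof_L=\rankprof_{\Sigma^\ast\setminus L}$).

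We need the hypothesis of Lemma~\ref{lem:cycle-lower}: every window of $p$ consecutive exponents $k\ge h$ contains an exponent with $rx^ks\in L$ and one with $rx^ks\notin L$. Such a window meets every residue class modulo $p$. In particular it contains some $k\equiv h+i$ and some $k'\equiv h+j \pmod p$ with $k,k'\ge h$. Writing $k=h+i+tp$ requires $t\ge0$. If instead $k<h+i$, use $a^{k}=a^{h+i}$ for all $k\ge h$ with $k\equiv h+i\pmod p$; this follows from $a^{k+p}=a^k$ for $k\ge h$. So membership depends only on the residue of $k$ once $k\ge h$. To be safe, one can also replace $h$ by $h+p$, which removes any boundary issue. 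Either way, membership of $rx^ks$ is nonconstant on every full residue window above $h$, which is what the proof of Lemma~\ref{lem:cycle-lower} uses.

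Now set $C=|r|+|s|$ and $\ell=|x|\ge1$, and choose $N_0$ so large that $K(n)=\lfloor (n-C)/\ell\rfloor$ satisfies $K(n)-p+1\ge\max\{h,1\}$ for all $n\ge N_0$. This is possible because $K(n)\to\infty$ monotonically. Lemma~\ref{lem:cycle-lower} then gives the displayed lower bound.

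\emph{The $\log_2 n+O_L(1)$ form.} It remains to turn the lower bound into $\log_2 n-O_L(1)$. For $n\ge N_0$ (enlarging $N_0$ if needed so that $n\ge 2C+2\ell p$), we have
\[
K(n)-p+1\ge \frac{n-C}{\ell}-p\ge \frac{n}{2\ell}.
\]
Hence the right-hand side is at least $\log_2 n-\log_2(2\ell)$, using $\lfloor y\rfloor+1>y$. Combined with the upper bound, $|\rankprof_L(n)-\log_2 n|$ is bounded for $n\ge N_0$. The finitely many smaller $n$ only affect the constant, so $\rankprof_L(n)=\log_2 n+O_L(1)$.

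The only delicate point is the residue-window bookkeeping above: checking that Lemma~\ref{lem:syntactic-extraction}'s conclusion, stated for the specific progressions $h+i+tp$ and $h+j+tp$, supplies the "every window above $h$" hypothesis of Lemma~\ref{lem:cycle-lower}. Everything else is assembly of earlier results.
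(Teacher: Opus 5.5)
Your proposal is correct and follows the paper's proof: the Sch\"utzenberger--McNaughton--Papert theorem plus Theorem~\ref{thm:bounded} give case (i). For case (ii), Lemma~\ref{lem:syntactic-extraction} feeds Lemma~\ref{lem:cycle-lower}, Theorem~\ref{thm:upper} gives the upper bound, and you add the explicit estimate $K(n)-p+1\ge n/(2\ell)$ that the paper only asserts. The residue bookkeeping you flag as delicate is automatic: $k\ge h$, $k\equiv h+i \pmod p$ and $0\le i<p$ already force $k=h+i+tp$ with $t\ge0$, so the case $k<h+i$ never arises.
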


\begin{proof}
If $\Syn(L)$ is aperiodic, then $L$ is $\FO[<]$-definable by the
Sch\"utzenberger--McNaughton--Papert theorem. Theorem~\ref{thm:bounded}
gives
\[
\rankprof_L(n)=O(1).
\]

Assume $\Syn(L)$ is not aperiodic. By
Lemma~\ref{lem:syntactic-extraction}, there are
\[
r,s\in\Sigma^\ast,
\quad
x\in\Sigma^+,
\quad
h\ge0,
\quad
p\ge2
\]
such that membership of $r x^k s$ is eventually nonconstant as a function of
$k\bmod p$. Let
\[
C=|r|+|s|,
\qquad
\ell=|x|.
\]
Choose $N_0$ so large that, for all $n\ge N_0$,
\[
\left\lfloor\frac{n-C}{\ell}\right\rfloor-p+1\ge h
\quad\text{and}\quad
\left\lfloor\frac{n-C}{\ell}\right\rfloor-p+1\ge1.
\]
Lemma~\ref{lem:cycle-lower} gives the displayed lower bound.

The universal upper bound gives
\[
\rankprof_L(n)\le\lceil\log_2 n\rceil+4.
\]
Since
\[
\log_2
\left(
\left\lfloor\frac{n-C}{\ell}\right\rfloor-p+1
\right)
=
\log_2 n+O_L(1),
\]
we obtain
\[
\rankprof_L(n)=\log_2 n+O_L(1).
\]
\end{proof}

\begin{corollary}[No intermediate regular rank growth]\label{cor:no-intermediate}
If $L\subseteq\Sigma^\ast$ is regular, then
\[
\rankprof_L(n)=O(1)
\quad\text{or}\quad
\rankprof_L(n)=\log_2 n+O_L(1).
\]
The bounded case occurs exactly for star-free languages; the logarithmic case
occurs exactly for regular languages that are not star-free.
\end{corollary}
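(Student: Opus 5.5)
The corollary follows directly from Theorem~\ref{thm:aperiodicity-gap} together with the Sch\"utzenberger--McNaughton--Papert theorem; the plan is to unpack the two cases and then check that they are mutually exclusive. Fix a regular language $L$. Either $\Syn(L)$ is aperiodic or it is not, and these two cases are exhaustive.

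\textbf{Aperiodic case.} Here clause~(i) of Theorem~\ref{thm:aperiodicity-gap} gives $\rankprof_L(n)=O(1)$. By the Sch\"utzenberger--McNaughton--Papert theorem, this is exactly the case in which $L$ is star-free.

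\textbf{Nonaperiodic case.} Here clause~(ii) gives $\rankprof_L(n)=\log_2 n+O_L(1)$, and $L$ is regular but not star-free.

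This already establishes the stated dichotomy.

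\textbf{Exclusivity and the ``exactly'' claims.} It remains to show that a single regular language cannot fall under both descriptions. I will argue that the logarithmic case is genuinely unbounded, so that it cannot coincide with the bounded case. In the nonaperiodic case, the explicit lower bound
\[
\left\lfloor\log_2\left(\left\lfloor\frac{n-C}{\ell}\right\rfloor-p+1\right)\right\rfloor+1
\]
tends to $+\infty$ as $n\to\infty$, since $C,\ell,p$ are fixed. Hence $\sup_n\rankprof_L(n)=\infty$, so $\rankprof_L$ is not $O(1)$.

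Alternatively, exclusivity follows without the explicit bound. By Theorem~\ref{thm:bounded}, bounded profile is equivalent to $L\in\FO[<]$. For regular $L$, this is in turn equivalent to $L$ being star-free. So a non-star-free regular $L$ has unbounded profile.

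Conversely, a star-free $L$ has bounded profile, and a bounded function is not $\log_2 n+O(1)$, because the latter is unbounded. Thus the bounded case occurs exactly for star-free $L$, and the logarithmic case occurs exactly for non-star-free regular $L$.

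\textbf{Main obstacle.} There is essentially none; all the substantive work is contained in Theorem~\ref{thm:aperiodicity-gap}. The only care needed is interpretive: ``$\log_2 n+O_L(1)$'' must be read as a two-sided estimate, so that it forces unboundedness and hence makes the two alternatives disjoint rather than merely jointly exhaustive.
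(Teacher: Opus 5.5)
Your proposal is correct and takes the same route as the paper, whose proof is the one-line combination of Theorem~\ref{thm:aperiodicity-gap} with the Sch\"utzenberger--McNaughton--Papert characterization. Your exclusivity paragraph just spells out what the paper leaves implicit: the logarithmic alternative is unbounded, either by the explicit lower bound or by Theorem~\ref{thm:bounded}, so the two cases are disjoint.
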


\begin{proof}
Combine Theorem~\ref{thm:aperiodicity-gap} with the star-free/aperiodic
characterization.
\end{proof}

\subsection{Quantitative syntactic constants}

The proof gives more than a dichotomy. It gives a lower-bound constant from
a syntactic witness. Define $\Lambda_L$ to be the set of quadruples
\[
(C,\ell,p,h)
\]
for which there exist
\[
r,s\in\Sigma^\ast,
\qquad
x\in\Sigma^+,
\qquad
|r|+|s|=C,
\qquad
|x|=\ell,
\]
such that membership of $r x^k s$ is nonconstant on residue classes modulo
$p$ for all $k\ge h$. For
\[
\lambda=(C,\ell,p,h)\in\Lambda_L,
\]
write
\[
B_\lambda(n)
=
\left\lfloor
\log_2
\left(
\left\lfloor\frac{n-C}{\ell}\right\rfloor-p+1
\right)
\right\rfloor+1.
\]
Then the proof gives
\[
\rankprof_L(n)\ge
\max_{\lambda\in\Lambda_L} B_\lambda(n)
\]
whenever the logarithms are defined. Since
\[
B_\lambda(n)=\log_2 n-\log_2\ell+O_\lambda(1),
\]
the leading coefficient is always one with respect to $\log_2 n$, but the
additive constant depends on the shortest available nonaperiodic witness.

\section{Unary regular languages}

The unary case is a minimal witness for the aperiodicity gap. A unary regular
language is ultimately periodic: if $L\subseteq\{a\}^\ast$ is regular, then
there exist
\[
N\ge0,\qquad p\ge1,\qquad R\subseteq\{0,\dots,p-1\}
\]
such that
\[
a^m\in L
\quad\Longleftrightarrow\quad
m\bmod p\in R
\qquad(m\ge N).
\]

\begin{proposition}[Unary first-order languages]\label{prop:unary-fo}
For $L\subseteq\{a\}^\ast$, the following are equivalent:
\begin{enumerate}[label=\textup{(\roman*)},leftmargin=2.4em]
\item $L$ is $\FO[<]$-definable;
\item $L$ is finite or cofinite.
\end{enumerate}
\end{proposition}

\begin{proof}
If $L$ is finite, say
\[
L=\{a^{m_1},\dots,a^{m_t}\},
\]
then
\[
L
=
\Lang(\chi_{a^{m_1}}\vee\cdots\vee\chi_{a^{m_t}})
\]
by Lemma~\ref{lem:exact-word}. Cofinite languages are complements of finite
languages, hence are also first-order definable.

Conversely, suppose $L$ is defined by a sentence of rank $q$. By
Fact~\ref{fact:linearorders},
\[
a^m\equiv_q a^{m'}
\qquad(m,m'\ge2^q).
\]
Therefore all words of length at least $2^q$ have the same membership in
$L$. Hence $L$ is finite or cofinite.
\end{proof}

\begin{corollary}[Unary regular dichotomy]\label{cor:unary-dichotomy}
Let $L\subseteq\{a\}^\ast$ be regular. Then exactly one of the following
holds:
\[
L\text{ is finite or cofinite}
\quad\text{and}\quad
\rankprof_L(n)=O(1),
\]
or
\[
L\text{ is neither finite nor cofinite}
\quad\text{and}\quad
\rankprof_L(n)=\log_2 n+O_L(1).
\]
\end{corollary}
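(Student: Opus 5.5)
The plan is to derive the corollary from Proposition~\ref{prop:unary-fo} and Theorem~\ref{thm:aperiodicity-gap}. I would not appeal to the algebraic characterization in the unary setting. Instead I would argue directly with the ultimately periodic normal form, which keeps everything elementary.

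\emph{Finite or cofinite case.} If $L$ is finite or cofinite, Proposition~\ref{prop:unary-fo} shows that $L$ is $\FO[<]$-definable. Theorem~\ref{thm:bounded} then gives $\sup_n\rankprof_L(n)<\infty$, that is, $\rankprof_L(n)=O(1)$.

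\emph{Neither finite nor cofinite: lower bound.} Here I would build a contexted cycle directly, so that Lemma~\ref{lem:cycle-lower} applies with $r=s=\eps$ and $x=a$, hence $C=0$ and $\ell=1$. Write $L$ in ultimately periodic form with data $N,p,R$, and shrink $p$ to the minimal eventual period. Because $L$ is infinite, $R\neq\varnothing$. Because $L$ is cofinite-free, $R\neq\{0,\dots,p-1\}$. Together these force $p\ge2$, and membership of $a^k$ for $k\ge N$ is nonconstant on residue classes modulo $p$. Lemma~\ref{lem:cycle-lower} with $h=N$ then gives
\[
\rankprof_L(n)\ge\lfloor\log_2(n-p+1)\rfloor+1
\]
for all $n\ge N+p-1$. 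This is $\log_2 n-O_L(1)$.

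\emph{Neither finite nor cofinite: upper bound.} Theorem~\ref{thm:upper} gives $\rankprof_L(n)\le\lceil\log_2 n\rceil+4$. Combined with the lower bound, $\rankprof_L(n)=\log_2 n+O_L(1)$.

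\emph{The two cases are exclusive.} A function that is $\log_2 n+O(1)$ is unbounded, so it cannot also be $O(1)$. Exactly one alternative therefore holds.

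\emph{Expected obstacle.} The only genuine subtlety is confirming $p\ge2$ from "neither finite nor cofinite". A period of $p=1$ would make the tail constant, so $L$ would be finite or cofinite. The same reasoning holds for any $p$ once $R$ is known to be a proper nonempty subset; no appeal to minimality is needed for that step. Everything else is direct substitution into the earlier lemmas.
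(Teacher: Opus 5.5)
Your proof is correct, but at the key step it takes a different route from the paper. The paper's proof is a two-line reduction. Proposition~\ref{prop:unary-fo} identifies ``finite or cofinite'' with $\FO[<]$-definability, and then Theorem~\ref{thm:aperiodicity-gap} is applied. Placing a language that is neither finite nor cofinite into case (ii) of that theorem implicitly requires knowing that $\Syn(L)$ is nonaperiodic. That comes from the direction ``aperiodic $\Rightarrow$ first-order definable'' of the Sch\"utzenberger--McNaughton--Papert theorem, and the contexted cycle is then produced abstractly by Lemma~\ref{lem:syntactic-extraction}.

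You instead read the contexted cycle directly off the ultimately periodic normal form, with $r=s=\eps$, $x=a$, $h=N$ and $p$ the eventual period. You observe that ``neither finite nor cofinite'' says exactly that $\varnothing\ne R\subsetneq\{0,\dots,p-1\}$, and feed this into Lemma~\ref{lem:cycle-lower}. This bypasses both the algebraic characterization and the syntactic extraction lemma, and it uses only the easy direction (ii)$\Rightarrow$(i) of Proposition~\ref{prop:unary-fo}. It also yields explicit constants ($C=0$, $\ell=1$) in terms of the threshold and period of $L$ itself. This extends to every unary regular language the explicit bound the paper records only for $\Mod_{p,R}$. The paper's route buys brevity and uniformity with the general theorem.

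One cosmetic point: Lemma~\ref{lem:cycle-lower} also requires $K(n)-p+1\ge1$. Your range $n\ge N+p-1$ therefore suffices only when $N\ge1$; since $N$ can always be increased, simply take $N\ge1$.
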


\begin{proof}
By Proposition~\ref{prop:unary-fo}, unary first-order definability is
equivalent to being finite or cofinite. Apply
Theorem~\ref{thm:aperiodicity-gap}.
\end{proof}

For the modular language
\[
\Mod_{p,R}
=
\{a^m:m\bmod p\in R\},
\]
where
\[
p\ge2,
\qquad
\varnothing\ne R\subsetneq\{0,\dots,p-1\},
\]
the explicit bounds are
\[
\lfloor\log_2(n-p+1)\rfloor+1
\le
\rankprof_{\Mod_{p,R}}(n)
\le
\lceil\log_2 n\rceil+4
\qquad(n\ge p).
\]
For parity,
\[
\mathsf{Even}:=\{a^{2m}:m\ge0\},
\]
this gives
\[
\lfloor\log_2(n-1)\rfloor+1
\le
\rankprof_{\mathsf{Even}}(n)
\le
\lceil\log_2 n\rceil+4
\qquad(n\ge2).
\]

Threshold languages show the bounded side. For
\[
T_t=\{a^m:m\ge t\},
\]
the complement is
\[
\{a^0,a^1,\dots,a^{t-1}\}.
\]
Using Lemma~\ref{lem:exact-word},
\[
\sup_n\rankprof_{T_t}(n)
\le
\lceil\log_2 t\rceil+4
\qquad(t\ge1).
\]

\section{Further directions}

The aperiodicity gap is sharp for regular languages when quantifier rank is
the only measured resource. The following refinements are not part of the
main classification. They are included to delimit what the theorem proves and
what it does not prove.

\subsection{Rank-size Pareto profiles}

For a sentence $\varphi$, let $|\varphi|$ denote its formula-tree size. Define
the finite-horizon rank-size feasible set
\[
\calP_L(n)
=
\left\{
(q,s)\in\N^2:
\exists\varphi\in\FO[<]\text{ such that }
\qr(\varphi)\le q,\ |\varphi|\le s,\ 
\forall w\in\Sigma^{\le n},\
w\models\varphi\Longleftrightarrow w\in L
\right\}.
\]
The rank profile is the projection
\[
\rankprof_L(n)=\min\{q:\exists s,\ (q,s)\in\calP_L(n)\}.
\]
The size profile is the other projection:
\[
s_L(n)=\min\{s:\exists q,\ (q,s)\in\calP_L(n)\}.
\]

The exact-word construction gives a rank bound but not a small-size bound. If
the recursive distance formula is expanded as a tree, its size satisfies
\[
S(0)=O(1),
\qquad
S(1)=O(1),
\]
and
\[
S(d)\le S(\lfloor d/2\rfloor)+S(\lceil d/2\rceil)+O(1),
\]
so
\[
S(d)=O(d).
\]
For a word $w$ of length $m$, the formula $\chi_w$ contains distance formulas
with parameters
\[
0,1,\dots,m-1.
\]
Thus a direct expansion yields
\[
|\chi_w|
=
O\left(\sum_{i=0}^{m-1} S(i)\right)
=
O(m^2).
\]
Consequently, the naive finite-horizon classifier satisfies
\[
|\varphi_{L,n}|
\le
O\left(
\sum_{w\in L\cap\Sigma^{\le n}} |w|^2
\right)
\le
O\left(
n^2\sum_{m=0}^{n}|\Sigma|^m
\right).
\]
For $|\Sigma|\ge2$, this is
\[
|\varphi_{L,n}|=O(n^2|\Sigma|^n).
\]
Thus Theorem~\ref{thm:upper} should be read as
\[
(\lceil\log_2 n\rceil+4,\ O(n^2|\Sigma|^n))
\in\calP_L(n),
\]
not as a succinctness statement.

A more robust invariant is the Pareto frontier
\[
\Min\calP_L(n)
=
\left\{
(q,s)\in\calP_L(n):
\not\exists(q',s')\in\calP_L(n)
\text{ with }
q'\le q,\ s'\le s,\ (q',s')\ne(q,s)
\right\}.
\]
For regular non-star-free languages, Theorem~\ref{thm:aperiodicity-gap}
implies that every point
\[
(q,s)\in\calP_L(n)
\]
satisfies
\[
q\ge\log_2 n-O_L(1).
\]
It gives no lower bound on $s$ once $q$ is logarithmic.

\begin{problem}[Rank-size tradeoff for regular languages]\label{prob:rank-size}
For regular $L$, determine the asymptotic shape of
\[
\Min\calP_L(n).
\]
In particular, for non-star-free regular $L$, decide whether there are
constants $A,B$ such that
\[
(\lceil\log_2 n\rceil+A,\ n^B)\in\calP_L(n)
\]
for all $n$, and characterize the languages for which this is possible.
\end{problem}

\subsection{Fragment profiles}

Let $\calF\subseteq\FO[<]$ be a syntactically defined fragment. Define
\[
\rankprof_L^{\calF}(n)
=
\min\left\{
q:
\exists\varphi\in\calF,\
\qr(\varphi)\le q,\
\forall w\in\Sigma^{\le n},\
w\models\varphi\Longleftrightarrow w\in L
\right\},
\]
with value $+\infty$ if no such formula exists.

The proof of Theorem~\ref{thm:bounded} used two closure properties of full
first-order logic:
\[
\text{finite rank-}q\text{ type definability}
\]
and
\[
\text{finite disjunction of rank-}q\text{ type definitions}.
\]
For a fragment $\calF$, the analogue may fail. Even when one can define a
fragment equivalence
\[
u\equiv_q^{\calF} v
\quad\Longleftrightarrow\quad
u,v\text{ agree on all }\calF\text{-sentences of rank }\le q,
\]
it need not be true that every $\equiv_q^{\calF}$-class is itself definable
in $\calF$ at rank $q$. Therefore the implication
\[
\text{finite-horizon saturation}
\quad\Longrightarrow\quad
\text{finite-horizon definability}
\]
must be reproved fragment by fragment.

For a fragment $\calF$, distinguish three profiles:
\[
\rankprof_L^{\calF,\mathrm{sem}}(n)
=
\min\left\{
q:
u\equiv_q^{\calF}v,\ |u|,|v|\le n
\Rightarrow
(u\in L\Longleftrightarrow v\in L)
\right\},
\]
\[
\rankprof_L^{\calF,\mathrm{syn}}(n)
=
\min\left\{
q:
\exists\varphi\in\calF,\
\qr(\varphi)\le q,\
\varphi\text{ classifies }L\text{ on }\Sigma^{\le n}
\right\},
\]
and
\[
\rankprof_L^{\calF,\mathrm{bool}}(n)
=
\min\left\{
q:
L\cap\Sigma^{\le n}
\text{ is a Boolean combination of }\equiv_q^{\calF}\text{-classes}
\right\}.
\]
For full $\FO[<]$, these three coincide:
\[
\rankprof_L^{\FO,\mathrm{sem}}(n)
=
\rankprof_L^{\FO,\mathrm{syn}}(n)
=
\rankprof_L^{\FO,\mathrm{bool}}(n)
=
\rankprof_L(n).
\]
For fragments such as $\FO^2[<]$, alternation levels, or positive first-order
logic, the equalities become substantive questions.

\begin{problem}[Fragment gap problem]\label{prob:fragment-gap}
Let $\calF$ be one of
\[
\FO^2[<],
\qquad
\Sigma_i[<],
\qquad
\mathcal B\Sigma_i[<],
\qquad
\FO^+[<].
\]
Classify the regular languages $L$ for which
\[
\rankprof_L^{\calF,\mathrm{syn}}(n)=O(1),
\]
and determine whether unbounded profiles must be logarithmic or can have
intermediate growth.
\end{problem}

The proof of Theorem~\ref{thm:aperiodicity-gap} does not immediately transfer
to these fragments. The syntactic extraction still produces contexted powers
\[
r x^k s,
\]
but the power lemma
\[
x^m\equiv_q x^{m'}
\qquad(m,m'\ge2^q)
\]
is a full $\FO[<]$ statement. A fragment may distinguish powers with a
different threshold, or may fail to express the Boolean combinations needed
to isolate rank-$q$ types.

\subsection{Separator growth beyond boundedness}

Theorem~\ref{thm:separator} gives the exact boundedness criterion:
\[
\sup_n\sepprof_{K,H}(n)<\infty
\quad\Longleftrightarrow\quad
K,H\text{ are }\FO[<]\text{-separable}.
\]
For regular pairs, the profile question is finer.

Let $K,H$ be regular, recognized by a common morphism
\[
\alpha:\Sigma^\ast\to M
\]
with accepting sets
\[
F_K,F_H\subseteq M,
\qquad
F_K\cap F_H=\varnothing.
\]
Define the separator defect
\[
\defect_{q,n}^{K,H}
=
\left\{
(m_K,m_H)\in F_K\times F_H:
\begin{array}{l}
\exists u\in K\cap\Sigma^{\le n},
\exists v\in H\cap\Sigma^{\le n}\\
\alpha(u)=m_K,\
\alpha(v)=m_H,\
u\equiv_q v
\end{array}
\right\}.
\]
Then
\[
\sepprof_{K,H}(n)
=
\min\{q:\defect_{q,n}^{K,H}=\varnothing\}.
\]
If $\sepprof_{K,H}$ is unbounded, then for every $q$ there exist
\[
u_q\in K,
\qquad
v_q\in H
\]
with
\[
u_q\equiv_q v_q.
\]
The regular-language single-profile proof gives a logarithmic lower bound
only when these witnesses can be chosen in a common contexted-power form:
\[
u_t=r x^{a+tp}s,
\qquad
v_t=r x^{b+tp}s.
\]
For arbitrary inseparable pairs, the existence of such synchronized witnesses
is not automatic.

\begin{problem}[Separator rate classification]\label{prob:separator-rate}
For regular disjoint $K,H$, classify the possible asymptotic rates of
\[
\sepprof_{K,H}(n)
\]
when $K$ and $H$ are not $\FO[<]$-separable. In particular, decide whether
every unbounded regular separator profile satisfies
\[
\sepprof_{K,H}(n)=\log_2 n+O_{K,H}(1),
\]
or whether slower unbounded rates can occur.
\end{problem}

A positive answer would require a separator analogue of
Lemma~\ref{lem:syntactic-extraction}: from nonseparability, one would need to
extract a common long-power scheme whose two sides land in $K$ and $H$
respectively. A negative answer would require a regular pair whose shortest
rank-$q$ inseparability witnesses have length superexponential in $q$.

\section{Conclusion}

The finite-horizon rank profile isolates one resource: the first-order
quantifier rank needed to classify a language on bounded-length observations.
For arbitrary languages, exact classification is always possible at
logarithmic rank. For all languages, boundedness of the profile is exactly
first-order definability. For regular languages, the profile has a sharp
algebraic dichotomy:
\[
\Syn(L)\text{ aperiodic}
\quad\Longleftrightarrow\quad
\rankprof_L(n)=O(1),
\]
and otherwise
\[
\rankprof_L(n)=\log_2 n+O_L(1).
\]
The proof of the unbounded case is syntactic and game-theoretic:
nonaperiodicity produces contexted periodic powers, and long powers of a
fixed word are indistinguishable by bounded-rank first-order sentences until
logarithmic depth.

The resulting picture is rigid for full $\FO[<]$ rank, but not for refined
resources. Rank-size Pareto profiles, fragment profiles, separator growth,
and optimized syntactic witnesses remain separate quantitative problems.

\end{document}